\newtheorem{theorem}{{\bf Theorem}}
\newtheorem{lemma}{{\bf Lemma}}
\newtheorem{corollary}{{\bf Corollary}}
\newtheorem{defn}{{\bf Definition}}
\begin{document}
%
\title{Optimal Slicing and Scheduling with Service Guarantees in Multi-Hop Wireless Networks}
\author{Nicholas Jones and Eytan Modiano\\
Laboratory for Information and Decision Systems, MIT\\
jonesn@mit.edu,  modiano@mit.edu
\thanks{This material is based upon work supported by the Department of the Air Force under Air Force Contract No. FA8702-15-D-0001. Any opinions, findings, conclusions or recommendations expressed in this material are those of the author(s) and do not necessarily reflect the views of the Department of the Air Force.}}

\IEEEaftertitletext{\vspace{-0.6\baselineskip}}
\maketitle

\begin{abstract}
    We analyze the problem of scheduling in wireless networks to meet end-to-end service guarantees. Using network slicing to decouple the queueing dynamics between flows, we show that the network’s ability to meet hard throughput and deadline requirements is largely influenced by the scheduling policy. We characterize the feasible throughput/deadline region for a flow under a fixed route and set of slices, and find throughput- and deadline-optimal policies for a solitary flow. We formulate the feasibility problem for multiple flows in a general topology, and show its equivalence to finding a bounded-cost cycle on an exponentially large graph, which is unsolvable in polynomial time by the best-known algorithm. Using a novel concept called delay deficit, we develop a sufficient condition for meeting deadlines as a function of inter-scheduling times, and show that regular schedules are optimal for satisfying this condition. Motivated by this, we design a polynomial-time algorithm that returns an (almost) regular schedule, optimized to meet service guarantees for all flows.
\end{abstract}

\maketitle

\section{Introduction}~\label{sec:intro}

Future wireless networks will need to make stringent throughput and delay guarantees to support emerging technologies, including real-time control and virtual reality systems. These guarantees take the form of service agreements, and due to their critical nature, network providers must move beyond best-effort service to ensure the agreements are met. As a result, work has begun on new methods of handling such traffic using network slicing~\cite{tmobile2023}. The 5G standard contains support for Quality of Service (QoS) guarantees at the flow level~\cite{5ghub}, including guarantees on maximum latency seen by any packet up to a given throughput. A largely open question, however, is how to make these service guarantees in wireless networks with limited resources, unreliable links, and interference constraints. 

In this work we follow a similar approach, using network slicing to decouple the queueing dynamics between traffic flows. This is not only practical for making service guarantees at the flow level, but useful for highlighting how wireless interference affects a network's ability to meet these guarantees. By understanding the impact of interference, we can enable networks to meet service requirements with fewer resources and to support more flows. To illustrate how interference affects delay, consider a line network with a flow traveling from node $1$ to node $4$ as in Figure~\ref{fig:scheduling-delay}, and for ease of exposition assume an interference model where only one link can transmit at a time.

\begin{figure}
    \centering
    \includegraphics[width=0.45\textwidth]{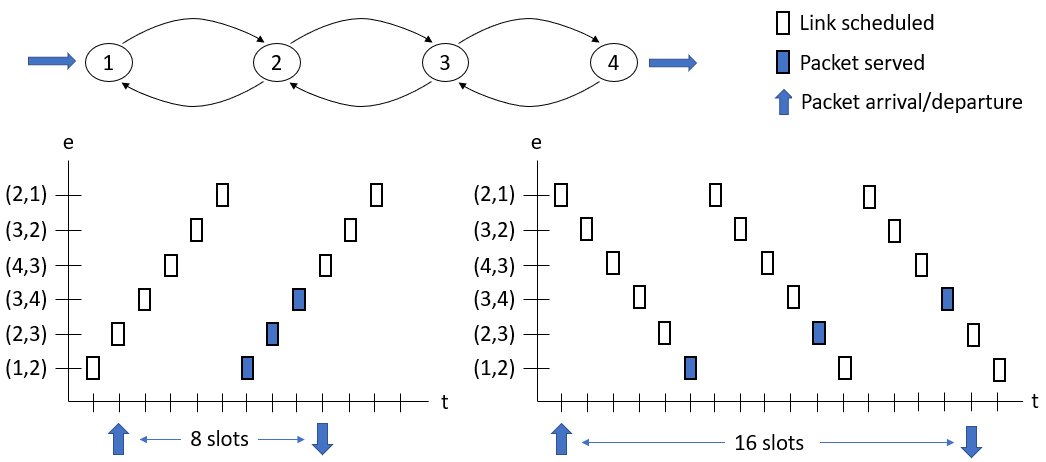}
    \caption{Effect of Scheduling Order on Packet Delay}
    \label{fig:scheduling-delay}
\end{figure}

The bottom of the figure shows two round-robin scheduling policies, and the worst-case end-to-end delay that a packet sees in each case, ignoring any queueing delay. The packet sees vastly different delays between the two policies, and the reason is clear. On the left it is served in three consecutive slots, while on the right it must wait five slots each time it is served before being served again. This simple example shows the impact of \textit{scheduling order} on packet delays. The link activation rates and the packet's route remain the same in both examples, but ordering the schedule in the direction of the flow's route has a significant impact on worst-case delay. As the size of the network grows, more complex interference models are considered, and more flows are introduced with different routes, this impact becomes even more pronounced, while at the same time harder to analyze. Motivated by this, we study the impact of interference and scheduling order on packet delays. Then, coupled with network slicing, we seek to characterize feasible service guarantees and to find supporting policies in general wireless networks.

There is a large body of work on wireless scheduling for maximizing throughput~\cite{tassiulas1990stability,jain2003impact,hajek_link_1988,kodialam2003characterizing}. In particular, Hajek and Sasaki~\cite{hajek_link_1988} designed a novel algorithm to find a minimum schedule length which meets a set of link demands in polynomial time. Kodialam et al.~\cite{kodialam2003characterizing} used Shannon's algorithm for coloring a multigraph to efficiently find schedules that are guaranteed to achieve at least $2/3$ of the maximum throughput, again in polynomial time.

There has also been considerable work done on making QoS guarantees in networks. One of the first approaches was Cruz's network calculus~\cite{cruz1991calculus,cruz1991calculus2}, which characterizes the average rate and burstiness of packet arrivals using a traffic shaping envelope, and then uses the convolution of service processes to bound the delay each packet experiences over multiple hops in a wired network. Several works have extended this framework to the wireless setting using a variant called \textit{stochastic} network calculus~\cite{fidler2006end, burchard2006min, li2007network, al2013min}, which bounds the tails of arrival and service processes to obtain a high-probability bound on end-to-end delay.

A novel QoS framework for single-hop wireless networks in the stochastic setting was developed by Hou and Kumar~\cite{hou2009theory}. It assumes a constant arrival process from each source and a strict deadline by which each packet must be delivered over an unreliable channel. By tracking the rates at which packets from each source are delivered, they design a policy to ensure the ``delivery ratio,'' or time average fraction of packets which are delivered by their deadline, meets a reliability requirement. They extend this framework in~\cite{hou2010utility} to solve utility maximization, and in~\cite{hou2013scheduling} to support Markov arrival processes. Several works have extended a version of this framework to multi-hop. In~\cite{li2012scheduling}, the authors analyze a multi-hop network with end-to-end deadline constraints and develop policies to meet delivery ratio requirements over wired links. In~\cite{liu2019spatial}, the authors design a spatio-temporal architecture with virtual links to solve a similar problem. In~\cite{singh2018throughput} and~\cite{singh2021adaptive}, the authors analyze a multi-hop wireless network with unreliable links. By considering each packet individually, they develop both centralized and decentralized policies for maximizing throughput with hard deadlines, using relaxed link capacity constraints and assuming no interference.

The closest to our work is~\cite{djukic2007quality} and~\cite{djukic_delay_2009}, which consider a multi-hop wireless network with interference constraints and find transmission schedules to meet traffic deadlines under constant arrivals. They show that when the direction of traffic is uniform, the optimal ordering can be found in polynomial time, and they develop heuristics for when the traffic direction is not uniform. The efficiency of these schedules was improved in~\cite{chilukuri_delay-aware_2015} by optimizing slot re-use for non-interfering links. The authors of~\cite{cappanera_link_2009} and~\cite{cappanera_efficient_2011} generalize the arrival processes to calculus-style envelopes and consider the case of sink-tree networks, while~\cite{cappanera_optimal_2013} incorporates routing.

Each paper in this line of work considers policies where links are scheduled in one contiguous block of time slots per scheduling period, which reduces the problem complexity but leads to end-to-end packet delays that grow with the schedule length. In this paper, we generalize and extend this line of work by designing scheduling policies without this restriction, deriving feasibility conditions on service guarantees, and incorporating network slicing. Using a novel concept called \textit{delay deficit}, we are able to meet much tighter deadlines without sacrificing throughput or adding complexity. Our main contributions can be summarized as follows.
\vspace{2pt}
\begin{itemize}
    \item In Section~\ref{sec:isolatedflows}, we develop necessary conditions on feasibililty for throughput and deadline guarantees, and both throughput- and deadline-optimal policies for a solitary flow under general interference.
    \item In Section~\ref{sec:interferencemodels}, we derive exact solutions to these optimal policies for three specific interference models. Focusing primarily on primary interference, we show that packet delays can grow linearly with the schedule length if the schedule order is not carefully designed.
    \item In Section~\ref{sec:generalflows}, we show that when deadlines are non-trivial, solving the feasibility problem for service guarantees in general networks is intractable, and we prove an alternative upper bound on packet delays based on schedule regularity.
    \item Finally, in Section~\ref{sec:linkscheduling}, we develop a polynomial-time algorithm to construct (almost) regular schedules, which are guaranteed to satisfy throughput and delay requirements without overprovisioning slices.
\end{itemize}

\vspace{5pt}
\section{Preliminaries}~\label{sec:sysmodel}
\subsection{System Model}

We consider a wireless network with fixed topology modeled as a directed graph $G=(V,E)$. Each link $e \in E$ has a fixed capacity $c_e$ and can transmit up to this number of packets in each time slot it is activated. Because links are wireless and share a wireless channel, they are subject to interference, which restricts the sets of links that can be scheduled at the same time. Time is slotted, with the duration of one slot equal to the transmission time over each link, so in each time slot the controller chooses a non-interfering set of links to be active. 

We consider a set of interference models $\Phi$, which take the form of $\phi$-hop interference constraints. Specifically, for any $\phi$, no two links can be activated at the same time if they are separated by fewer than $\phi$ hops. We will also refer to the interference model itself as $\phi \in \Phi$. The main focus of this work is primary interference, where $\phi=1$, but our framework can be generalized to any interference model in $\Phi$, and we present general results where possible. In the special cases of no interference and total interference, where only one link can be activated at a time, we define $\phi$ to be $0$ and $|E|-1$ respectively. Finally, denote $M^{\phi}$ as the set of feasible link activations, i.e., non-interfering links which can be activated simultaneously, under the model $\phi$.

Traffic arrives at the network in the form of flows, which can represent a single customer or an aggregation of customers. Denote flow $i$ as $f_i$ and the set of flows as $\mathcal{F}$. Arrivals are deterministic with $\lambda_i$ packets from $f_i$ arriving at the beginning of each slot\footnote{This can be generalized to a network calculus-style envelope, and for simplicity of exposition we assume traffic shaping occurs before packets arrive at the source.}. Each packet belonging to $f_i$ has a deadline $\tau_i$, and we say that a packet meets its deadline if it is delivered to its destination within $\tau_i$ slots of when it arrives, otherwise it expires. Flow $f_i$ is assigned a fixed pre-determined route $T^{(i)} \in \mathcal{T}$ from source to destination, where $\mathcal{T}$ is the set of all routes assigned to a flow. We denote $T^{(i)}_j$ as the $j$-th hop in the route $T^{(i)}$. The set of flows is fixed over a finite horizon $T$, which is independent of packet deadlines and assumed to be substantially larger. Assume that packet arrivals stop at time $T$, but packets remaining in the network are given time to be served by their respective deadlines. 

Each link $e \in T^{(i)}$ reserves capacity for $f_i$ in the form of a network slice, with capacity, or \textit{slice width}, equal to $w_{i,e}$. We will also refer to the slice itself as $w_{i,e}$ when there is no risk of confusion. Denote the vector of slice widths as $\boldsymbol{w} = \{ w_{i,e}, \ \forall f_i \in \mathcal{F}, e \in E \}$. Because link capacities are fixed, the sum of all slice widths allocated on link $e$ must be bounded by $c_e$. Each slice has its own first-come-first-served queue, which decouples the queueing dynamics between flows. Let $Q_{i,e}(t)$ be the size of the queue belonging to slice $w_{i,e}$ at the beginning of time slot $t$, after packets have arrived but before any packets are served. We assume the network is empty before $t=0$, so $Q_{i,e}(t) = 0$ for all $i$ and $e$, and $t < 0$.

\vspace{2pt}
\subsection{Policy Structure}\label{sec:policy}

Define $\Pi$ as the set of admissible scheduling policies, which are work-conserving and satisfy the interference constraints $\phi$. Let $\mu^{\pi}(t) \in M^{\phi}$ be the set of links activated at time $t$ under $\pi$, and let $\mu_e^{\pi}(t) = 1$ if $e \in \mu^{\pi}(t)$, and $0$ otherwise. The work-conserving property ensures that each link $e \in \mu^{\pi}(t)$ serves the smaller of its queue size and its slice capacity each time it is activated. We are interested in policies which meet the following criteria.

\begin{defn}
    A policy $\pi \in \Pi$ \textit{supports} a set of flows $\mathcal{F}$ on a time interval if and only if no packets expire during that interval. If no interval is specified, it is assumed to hold for all $t \geq 0$.
\end{defn}

Specifically, we would like to understand how wireless interference affects scheduling delay, what sets of flows can be supported in general, and how to design policies which support a set of flows while minimizing slice widths, thus leaving as much capacity as possible for best-effort traffic.

Without wireless interference, this problem would be trivial because there is no scheduling component. Under our model of constant arrivals with fixed routes and slicing, each link can forward a slice width of packets at every time step. Then the network need only guarantee that slice widths satisfy both throughput and link capacity constraints, and that the number of hops in each flow's route is less than its deadline. In the wireless setting, this problem becomes nontrivial and interesting. The scheduling policy, which is a function of the interference, dictates how often and in what order links are scheduled. The less frequently a link is served, the larger its slice width must be for a given throughput, and the more delay a packet sees at that link in the worst case. 

Packets arrive at a fixed rate in our model, so the dynamics of the system can be described exactly under a fixed policy in $\Pi$. A consequence of this fact combined with finite deadlines is that queue sizes are bounded under any policy that supports $\mathcal{F}$. This allows us to analyze the inherently difficult problem of delay guarantees in multi-hop wireless networks in a more tractable way. In particular, it leads to the following result.

\begin{theorem}\label{th:cyclicschedules}
    If there exists a policy in $\Pi$ that supports $\mathcal{F}$, then there must exist at least one such policy $\pi$ that is cyclic with period $K^{\pi}$, so that 
    \begin{equation}
        \mu^{\pi}(t) = \mu^{\pi}(t+K^{\pi}), \ \forall \ t \geq 0,
    \end{equation}
    and under $\pi$,
    \begin{equation}
        Q_{i,e}^{\pi}(t) = Q_{i,e}^{\pi}(t+K^{\pi}), \ \forall \ f_i \in \mathcal{F}, \ e \in E, \ t_0 \leq t \leq T,
    \end{equation}
    for some $t_0 > 0$ and sufficiently large $T$.
\end{theorem}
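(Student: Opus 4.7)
The plan is to exploit the fact that deterministic arrivals combined with finite deadlines confine the system to a finite state space, so any supporting trajectory must revisit a state; I would then extract a cyclic activation pattern from the segment between two such revisits, apply it from empty initial queues, and use a monotone coupling to argue that the resulting cyclic policy still supports $\mathcal{F}$.

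Because $\pi'$ supports $\mathcal{F}$, every flow-$i$ packet is delivered within $\tau_i$ slots of arrival, so at most $\lambda_i \tau_i$ flow-$i$ packets reside in any slice $w_{i,e}$ at any time. I would augment the state beyond raw queue lengths to the \emph{age profile} of each FIFO slice---the ordered tuple of ``slots since arrival at the source'' for each waiting packet---since two snapshots with identical queue occupancies but different packet ages can diverge on future deadline compliance. Ages are bounded by $\tau_i$ and occupancies by $\lambda_i \tau_i$, so the augmented state space $\mathcal{S}$ is finite, and the joint state evolves deterministically from the previous state and the chosen activation $\mu \in M^{\phi}$. Applying pigeonhole to $\pi'$'s trajectory on a horizon $T > |\mathcal{S}|$ produces $t_1 < t_2 \le T$ with $s^{\pi'}(t_1) = s^{\pi'}(t_2)$. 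Set $K^{\pi} := t_2 - t_1$ and define $\pi$ by $\mu^{\pi}(t) = \mu^{\pi'}(t_1 + (t \bmod K^{\pi}))$ for every $t \ge 0$; cyclicity of $\mu^{\pi}$ is immediate, and determinism gives $Q^{\pi}_{i,e}(t+K^{\pi}) = Q^{\pi}_{i,e}(t)$ from the first time $t_0$ at which $s^{\pi}$ matches $s^{\pi'}(t_1)$.

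It remains to verify that $\pi$ supports $\mathcal{F}$ from the empty start. Introduce the auxiliary policy $\pi^{*}$ that uses the same activation sequence as $\pi$ but is initialized at $t=0$ in state $s^{\pi'}(t_1)$; by construction $\pi^{*}$ is a time-shift of $\pi'$'s segment on $[t_1, t_2)$, is $K^{\pi}$-periodic, and delivers every packet within its deadline (since $\pi'$ did). A joint induction on time and hop index---using that both $\min(Q,w)$ and $Q - \min(Q,w)$ are non-decreasing in $Q$---yields the componentwise coupling $Q^{\pi}_{i,e}(t) \le Q^{\pi^{*}}_{i,e}(t)$ at every queue and every $t$. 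Since arrivals match in both systems, conservation of flow-$i$ packets implies that cumulative deliveries under $\pi$ lag those under $\pi^{*}$ by at most the fixed initial backlog $B_i := \sum_{e \in T^{(i)}} Q^{\pi^{*}}_{i,e}(0)$. End-to-end FIFO within a flow then guarantees that the $n$-th new flow-$i$ packet under $\pi$ is delivered no later than the $n$-th new packet under $\pi^{*}$, which meets its deadline; hence no packets expire under $\pi$.

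The main obstacle is precisely this multi-hop monotone coupling: at first glance a downstream queue receives fewer arrivals when its upstream queue is smaller but also drains less when it is itself smaller, so the two effects seem to pull in opposite directions at each hop. The resolution is that the per-slot update at each hop is jointly non-decreasing in the entire queue vector, because $\min(Q,w)$ (service) and $Q - \min(Q,w)$ (residual occupancy) are each non-decreasing in $Q$, so the coupling inequality propagates step by step. Once that is secured, the remaining pieces---the finite-state pigeonhole, the determinism of state evolution, and the FIFO translation of the cumulative-deliveries lag into a per-packet delay bound---are routine.
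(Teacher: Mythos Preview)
Your proposal is correct and shares the paper's skeleton: bounded queues give a finite state space, pigeonhole locates a recurrent state under the original supporting policy, the activation segment between recurrences is replayed cyclically from empty initial conditions, and a monotone sample-path coupling handles the ramp-up. The substantive difference is in how you encode deadline compliance. You carry full age profiles in the state so that support is preserved by state equality, and then finish with an end-to-end FIFO argument translating the cumulative-delivery dominance $D^{\pi}(t) \ge D^{\pi^{*}}(t) - B_i$ into a per-packet delay comparison. The paper instead observes up front that constant arrivals plus per-slice FIFO make deadline compliance \emph{equivalent} to the queue-sum bound $\sum_{e\in T^{(i)}} Q_{i,e}(t) \le \lambda_i \tau_i$ (this is what becomes Corollary~\ref{cor:deadlinecondition}); queue lengths alone then suffice as state, and the per-link coupling $Q^{\pi}_{i,e}(t) \le Q^{\pi^{*}}_{i,e}(t)$ immediately gives the conclusion without your FIFO step. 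Your route is slightly more robust to the arrival model, while the paper's buys a shorter argument by exploiting the deterministic-arrival structure. One small slip: the $t_0$ you name as ``the first time $s^{\pi}$ matches $s^{\pi'}(t_1)$'' need not exist, since $\pi$ started from empty can stabilize at a strictly smaller periodic orbit than $\pi^{*}$; the correct $t_0$ is the first time $s^{\pi}(t_0)=s^{\pi}(t_0+K^{\pi})$, whose existence follows from applying the same monotonicity to $\pi$ against its own $K^{\pi}$-shift (the sequence $Q^{\pi}(nK^{\pi})$ is componentwise nondecreasing and bounded, hence stabilizes).
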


\begin{proof}
    See Appendix~\ref{app:cyclicschproof}.
\end{proof}

This result also provides the following necessary and sufficient condition for meeting deadlines.

\begin{corollary}\label{cor:deadlinecondition}
    A cyclic policy $\pi \in \Pi$ with sufficiently large $T$ supports $\mathcal{F}$ if and only if
    \begin{equation}\label{eq:queuebound}
        \max_{t \geq 0} \sum_{e \in T^{(i)}} Q_{i,e}^{\pi}(t) \leq \lambda_i \tau_i, \ \forall \ f_i \in \mathcal{F}.
    \end{equation}
\end{corollary}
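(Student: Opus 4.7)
The plan is to restate both ``$\pi$ supports $\mathcal{F}$'' and the queue-sum bound as conditions on the cumulative arrival and departure processes of each flow, and then observe that the two translations are literally the same inequality after a reindexing.

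First I would introduce, for each flow $f_i$, the cumulative arrival count $A_i(t)$ at the source and the cumulative departure count $D_i(t)$ at the destination, both measured at the end of slot $t$. Since arrivals are deterministic and occur at the start of each slot, $A_i(t)=(t+1)\lambda_i$ for $0\le t\le T$. A packet-counting argument along the route $T^{(i)}$ yields
\[
\sum_{e\in T^{(i)}} Q^{\pi}_{i,e}(t) \;=\; A_i(t) - D_i(t-1),
\]
since $Q^{\pi}_{i,e}(t)$ is measured after the slot-$t$ arrivals but before the slot-$t$ service. Because every slice operates FCFS and the route is a fixed chain, a short induction on the hop index shows that $f_i$ packets depart the destination in their source-arrival order. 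Hence $\pi$ supporting $f_i$ is equivalent to every batch of arrivals in slot $t$ having cleared within $\tau_i$ subsequent slots, i.e.\
\[
A_i(t)\;\le\; D_i(t+\tau_i-1), \qquad \forall\, t\ge 0.
\]

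Next, I would establish the equivalence of this condition with the queue-sum bound by the reindexing $s=t+\tau_i$, using the fact that $A_i(s)-A_i(s-\tau_i)=\lambda_i\tau_i$ from constant arrivals. This converts the deadline inequality into
\[
A_i(s) - D_i(s-1)\;\le\;\lambda_i\tau_i, \qquad \forall\, s\ge \tau_i,
\]
which is exactly $\sum_{e\in T^{(i)}} Q^{\pi}_{i,e}(s)\le\lambda_i\tau_i$; the same substitution run in reverse gives the converse. For $s<\tau_i$ the bound is automatic since $\sum_e Q^{\pi}_{i,e}(s)\le A_i(s)\le \lambda_i\tau_i$. Theorem~\ref{th:cyclicschedules} places the queue dynamics in their periodic regime from some $t_0$, and ``$T$ sufficiently large'' guarantees that residual packets clear within their deadlines after arrivals stop at $t=T$, so the supremum $\max_{t\ge 0}$ is well defined and is in fact attained on a single cycle.

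The main obstacle will be the careful bookkeeping that simultaneously justifies the identity $\sum_e Q^{\pi}_{i,e}(t) = A_i(t)-D_i(t-1)$ and the global FCFS ordering at the destination across multiple slices. Once one verifies that the chain structure of $T^{(i)}$ combined with FCFS inside every slice preserves the arrival order of $f_i$ packets all the way to the destination, the corollary reduces to the algebraic equivalence above, and the remaining steps are routine.
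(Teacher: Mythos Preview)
Your proposal is correct and follows essentially the same approach as the paper: both arguments hinge on the observation that, with constant arrivals and FCFS service along the chain $T^{(i)}$, ``no packet exceeds its deadline'' is equivalent to ``at most $\lambda_i\tau_i$ packets of $f_i$ are in the system at any time.'' The paper states this equivalence tersely by pointing back to the proof of Theorem~\ref{th:cyclicschedules}, whereas you make it explicit via the cumulative processes $A_i,D_i$ and the reindexing $s=t+\tau_i$; the underlying reasoning is the same.
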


\begin{proof}
    Because $T$ and $\lambda_i$ are finite, the sum is bounded and the maximum is guaranteed to exist. From the proof of Theorem~\ref{th:cyclicschedules}, the condition~\eqref{eq:queuebound} is necessary and sufficient for any $\pi$ to support $\mathcal{F}$ on the interval $0 \leq t \leq T$ because of the constant arrivals and FCFS queues. The proof also shows that any policy which supports $\mathcal{F}$ on the interval $t \leq T$ also supports $\mathcal{F}$ for $t > T$ by assuming dummy arrivals until all real packets have been delivered. Provided $T$ is sufficiently large for $\pi$ to have completed at least one cycle, the queue sizes with dummy arrivals will be no larger than those before time $T$, and the result follows.
\end{proof}

Corollary~\ref{cor:deadlinecondition} is useful for finding and verifying that a policy supports a set of flows. Once a policy is fixed, the system is completely deterministic, and iteratively solving the queue evolution equations allows one to track queue sizes at each $t$, verifying that deadlines are being met without keeping track of the delay of each individual packet. The utility of this result as well as Theorem~\ref{th:cyclicschedules} motivate us to consider only cyclic policies in the remainder of this work, without loss of optimality. Denote this class of policies as $\Pi_c \subseteq \Pi$. We will also assume in the remainder of this work that $T$ is sufficiently large for any policy to have completed at least one cycle.

Under a policy $\pi \in \Pi_c$, define the time average activation frequency of link $e$ as 
\begin{equation}
    \bar{\mu}_e^{\pi} \triangleq \frac{1}{K^{\pi}} \sum_{t=0}^{K^{\pi}} \mu_e^{\pi}(t),
\end{equation}
and the number of activations per scheduling period as $\eta_e^{\pi} \triangleq \bar{\mu}_e^{\pi} K^{\pi}$. Finally, let the time average service rate of slice $w_{i,e}$ be
\begin{equation}
    \bar{w}_{i,e}^{\pi} \triangleq \bar{\mu}_e^{\pi} w_{i,e}.
\end{equation}
We will make heavy use of these quantities in the analysis going forward.

\section{Feasibility}\label{sec:isolatedflows}

Having defined our policy class, we now turn to characterizing the feasibility region. Define the feasible region for a route $T^{(i)}$ with slice widths $\boldsymbol{w}$ as the set of all throughput/deadline pairs which a flow can achieve under $\phi$ and any scheduling policy in $\Pi_c$, and denote this region as $\Lambda^{\phi}_i(\boldsymbol{w})$. We define this region for a given route to show the feasible throughput/deadline guarantees that can be made to a flow on that route by optimizing the scheduling policy independently of other flows. In the general setting, with many flows on separate routes coupled through the scheduling policy, the jointly achievable region is defined as 
\begin{equation}\label{eq:feasregion}
    \Lambda^{\phi} (\boldsymbol{w}) \subseteq \prod_{T^{(i)} \in \mathcal{T}} \Lambda^{\phi}_i (\boldsymbol{w}),
\end{equation}
where the scheduling policy must optimize over all flows jointly. Analyzing $\Lambda^{\phi}_i (\boldsymbol{w})$ allows us to define optimal policies for a single flow, as well as bounds on feasibility in the general case. We begin by characterizing throughput optimality.

\subsection{Throughput Optimality}

Define $\lambda_i^*(\pi,\boldsymbol{w})$ as the maximum throughput a policy $\pi$ can support on $T^{(i)}$ with slice widths $\boldsymbol{w}$, given by the following intuitive result. 

\begin{lemma}\label{lemma:lambdabound}
    For any admissible policy $\pi \in \Pi_c$,
    \begin{equation}\label{eq:lambdabound}
        \lambda_i^*(\pi, \boldsymbol{w}) = \min_{e \in T^{(i)}} \bar{w}_{i,e}^{\pi}.
    \end{equation}
\end{lemma}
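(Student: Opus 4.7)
The plan is to prove the equality by showing both inequalities, treating $\lambda_i^\ast(\pi)$ as the largest arrival rate for which some finite deadline suffices to support the flow.

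For the upper bound $\lambda_i^\ast(\pi) \leq \min_{e \in T^{(i)}} \bar{w}_{i,e}^{\pi}$, I would pick any $e \in T^{(i)}$ and show that if $\lambda_i > \bar{w}_{i,e}^{\pi}$ then the slice queue $Q_{i,e}^\pi(t)$ grows without bound, which contradicts Corollary~\ref{cor:deadlinecondition}. The per-cycle service capacity at slice $w_{i,e}$ is $\eta_e^\pi w_{i,e} = \bar{w}_{i,e}^\pi K^\pi$, while per-cycle arrivals to that slice equal $\lambda_i K^\pi$ once upstream queues have non-trivial backlog. For the source link this is immediate since arrivals to the source slice equal $\lambda_i K^\pi$ every cycle. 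For an interior link, I would argue inductively along the route: by the work-conserving property and FCFS, the cumulative arrivals to the slice on $T^{(i)}_{j+1}$ up to time $t$ equal the cumulative departures from the slice on $T^{(i)}_j$ up to time $t$, which under the assumption that upstream slices are supported is at least $\lambda_i t - C_j$ for some constant $C_j$. Thus a deficit of at least $(\lambda_i - \bar{w}_{i,e}^\pi)K^\pi > 0$ accumulates per cycle at the bottleneck slice, driving $Q_{i,e}^\pi(t)$ to infinity and violating \eqref{eq:queuebound} for any finite $\tau_i$.

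For the achievability $\lambda_i^\ast(\pi) \geq \min_{e \in T^{(i)}} \bar{w}_{i,e}^{\pi}$, I would take $\lambda_i \leq \min_e \bar{w}_{i,e}^\pi$ and argue that the entire route's queues remain bounded and eventually periodic. At every link on $T^{(i)}$, per-cycle service capacity $\bar{w}_{i,e}^\pi K^\pi$ meets or exceeds per-cycle arrivals $\lambda_i K^\pi$, so no slice backlog can diverge; applying Theorem~\ref{th:cyclicschedules} then yields a uniform bound $B$ on $\sum_{e \in T^{(i)}} Q_{i,e}^\pi(t)$. Choosing $\tau_i \geq B / \lambda_i$ makes Corollary~\ref{cor:deadlinecondition} satisfied, so $\pi$ supports the flow at rate $\lambda_i$ for a sufficiently large but finite deadline.

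The main obstacle will be the downstream-link portion of the upper bound: arrivals to a slice on a non-source link are not a deterministic constant-rate process but are shaped by the activation pattern $\mu_e^\pi(t)$ of the upstream link. The key observation that unlocks the argument is that under a cyclic work-conserving policy, the time-average departure rate from each upstream slice equals the injected flow rate $\lambda_i$ whenever that upstream slice is itself stable, so the inductive propagation of $\lambda_i$ as the effective per-cycle arrival rate at each interior slice is justified. Once this propagation is formalized, both inequalities reduce to the per-cycle counting already outlined.
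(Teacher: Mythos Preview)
Your proposal is correct and in fact more complete than the paper's own proof. The paper argues only the upper bound: assuming $\lambda_i > \bar{w}_{i,e}^{\pi}$ for some $e$, it asserts that ``$\lambda_i K^{\pi}$ packets are added to the queue'' at $e$ per cycle and concludes $Q_{i,e}^{\pi}(t+K^{\pi}) > Q_{i,e}^{\pi}(t)$, contradicting Theorem~\ref{th:cyclicschedules}. It does not distinguish the source slice from interior slices, so the step you single out as the main obstacle---that interior-slice arrivals are not the exogenous constant stream $\lambda_i$ but the shaped departure process of the upstream slice---is precisely the gap the paper leaves implicit. Your inductive flow-conservation argument along the route is the right way to close it, and the cut formulation (total packets at or upstream of the bottleneck grows linearly) is an equally clean alternative.

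On achievability, the paper says nothing, so your second inequality is an addition rather than a different route. The idea is correct, but the appeal to Theorem~\ref{th:cyclicschedules} is slightly misapplied: that theorem guarantees the \emph{existence} of a cyclic supporting policy given that some supporting policy exists; it does not itself bound the queues under a fixed $\pi\in\Pi_c$. The boundedness you need follows more directly from the deterministic periodic structure of $\pi$: with per-cycle service capacity $\bar{w}_{i,e}^{\pi}K^{\pi}\ge\lambda_i K^{\pi}$ at every slice, a Lindley-type induction from the source down the route shows each $Q_{i,e}^{\pi}$ is eventually periodic and hence uniformly bounded. Then invoking Corollary~\ref{cor:deadlinecondition} with $\tau_i$ chosen large enough is exactly right.
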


\begin{proof}
    Assume for contradiction that $\lambda_i > \bar{w}_{i,e}^{\pi}$ for some $f_i$ and $e$. In one scheduling period $K^{\pi}$ for $t \leq T$, $\lambda_i K^{\pi}$ packets are added to the queue and at most $\bar{w}_{i,e}^{\pi} K^{\pi}$ packets are served. Then $Q_{i,e}(t+K^{\pi}) > Q_{i,e}(t)$ for all $t \leq T$, so from Theorem~\ref{th:cyclicschedules}, either $\pi \notin \Pi_c$ or does not support $f_i$, which is a contradiction.
\end{proof}

Now define a throughput-optimal policy as follows.

\begin{defn}
    A policy $\pi \in \Pi_c$ is throughput-optimal for a route $T^{(i)}$ and slice widths $\boldsymbol{w}$ if $\lambda_i^*(\pi,\boldsymbol{w}) \geq \lambda_i^*(\pi',\boldsymbol{w})$ for all $\pi' \in \Pi_c$.
\end{defn}

Because $\lambda_i^*(\pi,\boldsymbol{w})$ is the largest supported throughput for a given $\pi$, a throughput-optimal policy is one that maximizes this quantity over all $\pi \in \Pi_c$. In particular, it is any solution to 
\begin{align}\label{eq:thoptformulation}
\begin{aligned}
    \max_{\pi \in \Pi_c} &\min_{e \in T^{(i)}} \bar{\mu}_e^{\pi} w_{i,e} \\
    \text{s.t.} \ &\mu^{\pi}(t) \in M^{\phi} , \ \forall \ 0 \leq t \leq K^{\pi},
\end{aligned}
\end{align}
and we denote the solution as $\lambda_i^*(\boldsymbol{w})$. We will solve this problem exactly for different interference models in the next section, but note that on average the optimization tries to drive $\bar{\mu}_e^{\pi} w_{i,e}$ to equality along all links, so in general links wth smaller slices tend to be activated more frequently.

\subsection{Deadline Optimality}

Define $\tau_i^*(\pi, \boldsymbol{w}, \lambda_i)$ as the smallest deadline which the network can guarantee to a flow on $T^{(i)}$ with slice widths $\boldsymbol{w}$ and a throughput of $\lambda_i$ under policy $\pi$. Equivalently, we refer to $\tau_i^*$ as the largest end-to-end delay experienced by any packet in the flow. In a similar way, define $\tau_i^*(\pi) \triangleq \lim_{\lambda_i \to 0} \tau_i^*(\pi, \boldsymbol{w}, \lambda_i)$ as the smallest deadline that $\pi$ can guarantee for any $\lambda_i > 0$. Note that because $\lambda_i$ can be arbitarily close to zero, $\tau_i^*(\pi)$ is independent of $\boldsymbol{w}$. Because of the equivalence of maximum packet delays and minimum guaranteed deadlines, we will refer to $\tau_i^*$ interchangeably by either of these definitions.

To derive a lower bound on $\tau_i^*(\pi)$, we begin by introducing the concept of inter-scheduling times. Denote the set of time slots where link $e$ is scheduled under a policy $\pi$ as $\mathcal{T}_e^{\pi} \triangleq \{ t \geq 0 \ | \ \mu_e^{\pi}(t) = 1 \}$. Then define the minimum inter-scheduling time $\underline{k}_{e,e+1}^{\pi}$ of links $e$ and $e+1$ to be the smallest time interval between consecutive scheduling events of links $e$ and $e+1$, in that order, so that
\begin{equation}
    \underline{k}_{e,e+1}^{\pi} \triangleq \min_{t_e \in \mathcal{T}_e^{\pi}} \min_{t_{e+1} > t_e : t_{e+1} \in \mathcal{T}_{e+1}^{\pi}} (t_{e+1} - t_e),
\end{equation}
and the maximum inter-scheduling time $\overline{k}_{e,e+1}^{\pi}$ to be the largest such time, defined as
\begin{equation}
    \overline{k}_{e,e+1}^{\pi} \triangleq \max_{t_e \in \mathcal{T}_e^{\pi}} \min_{t_{e+1} > t_e : t_{e+1} \in \mathcal{T}_{e+1}^{\pi}} (t_{e+1} - t_e).
\end{equation}

We can also speak of the inter-scheduling times of a single link $e$ as the times between consecutive scheduling events of that link, and denote this as $\underline{k}_e^{\pi}$ and $\overline{k}_e^{\pi}$ respectively for ease of notation. Then we can lower bound minimum deadlines as follows.

\begin{lemma}\label{lemma:taulowerbound}
    For any admissible policy $\pi \in \Pi_c$,
    \begin{equation}\label{eq:taulowerbound}
        \tau_i^*(\pi) \geq \underline{k}_0^{\pi} + \sum_{0 \leq j < |T^{(i)}|-1} \underline{k}_{j,j+1}^{\pi} + 1,
    \end{equation}
    where link $j = T^{(i)}_j$ for all $j$.
\end{lemma}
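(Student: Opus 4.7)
The strategy is to exhibit a single test packet whose end-to-end delay is at least the claimed bound; since $\tau_i^*(\pi)$ is the maximum delay over packets in the regime $\lambda_i \to 0$, this suffices. Let $k = |T^{(i)}|$ and, for any packet traversing the route, write $t$ for its arrival slot and $t_0 < t_1 < \cdots < t_{k-1}$ for the slots in which the links $T^{(i)}_0, \ldots, T^{(i)}_{k-1}$ serve it. Under the paper's slot-counting convention, the packet's end-to-end delay equals $t_{k-1} - t + 1$, so it is enough to lower bound $t_{k-1} - t$.

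The inter-hop contribution is immediate from the definitions. Once $T^{(i)}_j$ serves the packet at slot $t_j$, link $T^{(i)}_{j+1}$ cannot serve it before its own next scheduling strictly after $t_j$, and this gap is at least $\underline{k}_{j,j+1}^{\pi}$ by definition of the minimum inter-scheduling time. Telescoping across $j = 0, \ldots, k-2$ gives $t_{k-1} - t_0 \ge \sum_{0 \le j < k-1} \underline{k}_{j,j+1}^{\pi}$, with equality only when every consecutive pair in the chain realizes its own minimum.

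For the source wait $t_0 - t$, I would invoke Theorem~\ref{th:cyclicschedules} to reduce to a cyclic schedule in which $\mathcal{T}_0^\pi$ is a periodic set, and then pick two consecutive schedulings $s_1 < s_2$ of link $T^{(i)}_0$ that realize $s_2 - s_1 = \underline{k}_0^{\pi}$. I would place the test packet so that it arrives in slot $s_1$ but misses that scheduling event; this placement is legitimate because the limit $\lambda_i \to 0$ decouples the test arrival from the deterministic arrival grid and allows the worst-case arrival epoch to be chosen adversarially against the cycle. The packet is then first served at $t_0 = s_2$, giving $t_0 - t = \underline{k}_0^{\pi}$. Substituting into $t_{k-1} - t + 1 = (t_{k-1} - t_0) + (t_0 - t) + 1$ and using the bounds above yields exactly~\eqref{eq:taulowerbound}, and taking a supremum over test packets proves the lemma.

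The main obstacle I anticipate is the source-wait step: justifying that a test packet may be placed to induce a wait of $\underline{k}_0^{\pi}$ rather than $\underline{k}_0^{\pi} - 1$ requires tying the construction carefully to the paper's within-slot ordering of arrivals and scheduling actions, together with a limiting argument built on $\tau_i^*(\pi) = \lim_{\lambda_i \to 0} \tau_i^*(\pi, \lambda_i)$. The inter-hop telescoping and the appeal to cyclicity are routine by comparison.
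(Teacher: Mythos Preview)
Your approach matches the paper's: decompose the end-to-end delay into the wait at the source, the telescoped inter-hop gaps $\sum_j \underline{k}_{j,j+1}^{\pi}$, and the final $+1$ for delivery. The paper's proof is essentially a two-sentence version of your plan.

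One simplification: you do not need the adversarial-placement argument for the source wait. In the paper's model, $\lambda_i$ packets arrive deterministically at \emph{every} slot regardless of how small $\lambda_i$ is, so there is nothing to ``decouple'' from an arrival grid. The paper simply observes that at least $\lambda_i$ packets (those arriving in the appropriate slot relative to a scheduling of link~$0$) must wait $\underline{k}_0^{\pi}$ slots before being served. The role of $\lambda_i \to 0$ is only to ensure the slice can clear the queue each time the link is activated, so the packet is served at the \emph{first} subsequent scheduling rather than a later one; it has nothing to do with choosing the arrival epoch.
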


\begin{proof}
    Any packet which is delivered to its destination must traverse all links on its route in order, and under a policy $\pi$, every packet served by link $e$ that arrives at link $e+1$ must wait at least $\underline{k}_{e,e+1}^{\pi}$ slots before being served. Therefore, the smallest amount of time between being served at the source link and being served at the destination link is $\sum_{0 \leq e < |T^{(i)}|-1} \underline{k}_{e,e+1}^{\pi}$. At least $\lambda_i$ packets must wait $\underline{k}_0^{\pi}$ slots from when they arrive at the source link until they are served, and it takes one slot to be delivered once served at the destination link. The result follows.
\end{proof}

This bound begins to formalize the idea that schedule order plays an important role in minimizing delay. In particular, it motivates us to minimize inter-scheduling times between consecutive links on a route to keep packet delays small. We will show that this results in a deadline-optimal policy, per the following definitions.
\begin{defn}
    A policy $\pi \in \Pi_c$ is deadline-minimizing for a route $T^{(i)}$, slice widths $\boldsymbol{w}$, and throughput $\lambda_i$ if $\tau_i^*(\pi, \boldsymbol{w}, \lambda_i) \leq \tau_i^*(\pi', \boldsymbol{w}, \lambda_i)$ for all $\pi' \in \Pi_c$.

    Furthermore, a policy $\pi \in \Pi_c$ is deadline-optimal for a route $T^{(i)}$ if $\tau_i^*(\pi) \leq \tau_i^*(\pi')$ for all $\pi' \in \Pi_c$.
\end{defn}

To avoid confusion, we distinguish between the terms \textit{deadline-minimizing} when speaking in terms of a specific throughput, and \textit{deadline-optimal} when speaking independently of throughput. 

We will show that deadline optimality is achieved by a subclass of $\Pi_c$ we call \textit{ordered round-robin} (ORR) scheduling policies, which minimize the bound in~\eqref{eq:taulowerbound} while showing that it is tight. Denote the ORR policy for $T^{(i)}$ under an interference model $\phi$ as $ORR^{\phi}(i)$. Recalling that links separated by fewer than $\phi$ hops cannot be scheduled simultaneously, define the ORR policy as follows. At each time $t$, activate link $T^{(i)}_j$, where $j=t \bmod \phi$, along with every $\phi+1$ subsequent links. Then at each slot, links at equally spaced intervals of $\phi+1$ hops are activated, beginning with hops $\{0, \phi+1, 2 \phi+2,\dots \}$ at time $t=0$, hops $\{1, \phi+2, 2 \phi+3, \dots \}$ at $t=1$, and so on. Note that this schedule has a period of $\phi+1$.

\begin{theorem}\label{th:orrbound}
    The ORR policy is deadline-optimal under any interference model $\phi \in \Phi$, with a maximum packet delay
    \begin{equation}
        \tau_i^* \big(ORR^{\phi}(i) \big) = |T^{(i)}|+\phi,
    \end{equation}
    and activation rates
    \begin{equation}
        \bar{\mu}_e^{ORR^{\phi}(i)} = \frac{1}{\phi+1}, \ \forall \ e \in T^{(i)}.
    \end{equation}
\end{theorem}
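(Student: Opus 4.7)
The plan is to prove the theorem in three steps: verify that $ORR(i)$ is admissible and compute its activation rates, upper bound the maximum packet delay it achieves, and establish a matching lower bound on $\tau_i^*(\pi)$ for every $\pi \in \Pi_c$.

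Admissibility and rates are direct. At slot $t$, $ORR(i)$ activates the set $\{T^{(i)}_j, T^{(i)}_{j+\phi+1}, T^{(i)}_{j+2(\phi+1)}, \dots\}$ with $j = t \bmod (\phi+1)$; any two such links are separated by at least $\phi$ intervening edges along the route and therefore do not conflict under the $\phi$-hop model. The schedule has cycle length $\phi+1$ with every link hit exactly once per cycle, so $ORR(i) \in \Pi_c$ and $\bar{\mu}_e^{ORR(i)} = 1/(\phi+1)$.

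For the upper bound I use the definition $\tau_i^*(\pi) = \lim_{\lambda_i \to 0} \tau_i^*(\pi, \lambda_i)$, under which queueing vanishes and each packet moves on the first activation after arriving at each queue. A packet arriving at slot $t_a$ waits at most $\phi$ slots for $T^{(i)}_0$ to fire at $s_0 = \lceil t_a/(\phi+1)\rceil(\phi+1)$; by the phase structure of $ORR(i)$, link $T^{(i)}_j$ is then activated at slot $s_0+j$ for each $j$, so the packet advances one hop per slot and is delivered by the end of slot $s_0 + |T^{(i)}|-1$, giving total delay at most $|T^{(i)}|+\phi$.

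For the matching lower bound, fix any $\pi \in \Pi_c$ supporting positive throughput on $T^{(i)}$. In the low-load limit the delay of a packet arriving at $t_a$ is $s_{|T^{(i)}|-1}(t_a) - t_a + 1$, where the $s_j(t_a)$ are computed greedily as successive next-activation times along the route. Consecutive links on the route share a vertex and hence conflict for every $\phi \geq 1$, so $s_{j+1}(t_a) > s_j(t_a)$ and the traversal contributes at least $|T^{(i)}|-1$. Using Lemma~\ref{lemma:taulowerbound} as a starting point, I would extract the remaining $\phi$ slots from the fact that the first $\phi+1$ links of the route pairwise conflict and form a clique in the conflict graph: at most one of them can fire per slot, so only a $1/(\phi+1)$ fraction of slots serve as "launching points" $s_0$ from which a packet enjoys the minimum $|T^{(i)}|-1$ traversal; every other arrival pays extra wait, and some worst-case arrival over a period must aggregate at least $\phi$ additional slots of wait, either at the source queue or distributed across the downstream gaps $\underline{k}_{j,j+1}^\pi$.

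The main obstacle is formalizing this last trade-off uniformly over all $\pi$, since a policy can compress $\underline{k}_0^\pi$ by firing the source densely at the cost of inflating some $\underline{k}_{j,j+1}^\pi$, or vice versa. The cleanest route will likely be a clique-based interval-covering argument on the first $\phi+1$ hops, showing that the worst-case trajectory span $s_{|T^{(i)}|-1}(t_a) - t_a$ cannot fall below $|T^{(i)}|+\phi-1$ regardless of how the activation pattern apportions wait between the source and the downstream hops; matching the upper bound then establishes $\tau_i^*(ORR(i)) = |T^{(i)}|+\phi$ and deadline-optimality of $ORR(i)$.
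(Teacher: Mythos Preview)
Your admissibility check, rate computation, and upper bound for $ORR(i)$ are correct and essentially identical to the paper's argument.

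The gap is in the lower bound. Lemma~\ref{lemma:taulowerbound} only gives $\tau_i^*(\pi) \geq \underline{k}_0^{\pi} + (|T^{(i)}|-1) + 1$, since all you know a priori is $\underline{k}_{j,j+1}^{\pi} \geq 1$; this falls $\phi - \underline{k}_0^{\pi}$ slots short, and nothing forces $\underline{k}_0^{\pi} \geq \phi$. Your clique observation is the right structural fact, but the ``only a $1/(\phi+1)$ fraction of slots are launching points'' heuristic does not convert into the needed per-packet worst-case bound: a policy can fire $T^{(i)}_0$ in many consecutive slots (making $\underline{k}_0^{\pi}$ small) and you must then argue that the downstream gaps \emph{for the specific worst-case packet} absorb the missing slots. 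Your proposal names this trade-off but does not close it, and the interval-covering idea is left at the level of a wish.

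The paper closes the gap with a different device. It restricts attention to the first $\phi+1$ hops (which pairwise conflict) and shows that some packet needs at least $2\phi+1$ slots to clear them; the remaining $|T^{(i)}|-\phi-1$ hops then contribute one slot each. For the $2\phi+1$ claim it introduces a potential $\Delta(t)$ measuring the excess delay, beyond the minimum $\phi+1$, that the cohort arriving at slot $t$ incurs before reaching hop $\phi+1$, and argues by contradiction: if $\Delta(t) < \phi$ for every $t$, then the cohort from slot $t+\phi$ can never ``catch up'' to the cohort from slot $t$ (each catch-up costs the leading cohort one unit of $\Delta$); taking the first trailing cohort $t+j^*$ that has not caught up, the cohort $t+j^*-1$ just ahead of it \emph{has} caught up and therefore has been served $\phi$ times without $t+j^*$ being served, forcing $\Delta(t+j^*) \geq \phi$. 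This cohort-coupling argument is the missing ingredient in your sketch.
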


\begin{proof}
    See Appendix~\ref{app:orrproof}.
\end{proof}

Note that the ORR policy meets the bound in~\eqref{eq:taulowerbound} with equality, because the minimum inter-scheduling times are $\underline{k}_0 = \phi$ and $\underline{k}_{e,e+1} = 1$ for all $e$. From~\eqref{eq:lambdabound}, the maximum throughput under the ORR policy is
\begin{equation}\label{eq:orrlambda}
    \lambda_i^* \big(ORR^{\phi}(i), \boldsymbol{w} \big) = \frac{1}{\phi+1} \min_{e \in T^{(i)}} w_{i,e},
\end{equation}
which is not throughput-optimal in general, and highlights the tradeoff between maximizing throughput and minimizing delay. When slice widths are equal, however, this tradeoff does not occur.

\begin{corollary}\label{cor:jointoptimalpoint}
    When slice widths are equal across $T^{(i)}$, the $ORR^{\phi}(i)$ policy is both throughput-optimal and deadline-optimal.
\end{corollary}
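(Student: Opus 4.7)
The plan is to leverage Theorem~\ref{th:orrbound} directly for deadline-optimality and to use Lemma~\ref{lemma:lambdabound} together with an interference-based counting argument to establish throughput-optimality. Since deadline-optimality of ORR under any $\phi \in \Phi$ is already stated in Theorem~\ref{th:orrbound}, the only remaining content is throughput-optimality under the equal-slice-widths hypothesis.

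First I would write $w_{i,e} = w$ for all $e \in T^{(i)}$ and apply Lemma~\ref{lemma:lambdabound} to reduce the throughput under any policy $\pi \in \Pi_c$ to
\begin{equation*}
    \lambda_i^*(\pi) = w \min_{e \in T^{(i)}} \bar{\mu}_e^{\pi}.
\end{equation*}
Thus maximizing $\lambda_i^*(\pi)$ over $\Pi_c$ is equivalent to maximizing $\min_{e \in T^{(i)}} \bar{\mu}_e^{\pi}$. By Theorem~\ref{th:orrbound}, $\bar{\mu}_e^{ORR(i)} = 1/(\phi+1)$ for every $e \in T^{(i)}$, so it suffices to show that every $\pi \in \Pi_c$ satisfies $\min_{e \in T^{(i)}} \bar{\mu}_e^{\pi} \le 1/(\phi+1)$.

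The key step is the following interference/pigeonhole observation. Consider any window of $\phi+1$ consecutive links $T^{(i)}_j, T^{(i)}_{j+1}, \dots, T^{(i)}_{j+\phi}$ along the route. Any two such links $T^{(i)}_a, T^{(i)}_b$ with $a < b \le j+\phi$ satisfy $b-a \le \phi$, so their endpoints are separated by at most $\phi-1$ hops, which violates the $\phi$-hop interference constraint. Hence at most one of these $\phi+1$ links can be active in any slot. Summing over a full cycle of length $K^{\pi}$ yields
\begin{equation*}
    \sum_{\ell=0}^{\phi} \eta_{T^{(i)}_{j+\ell}}^{\pi} \le K^{\pi},
\end{equation*}
so by averaging at least one link in the window has $\bar{\mu}_e^{\pi} \le 1/(\phi+1)$, giving the desired bound on the minimum.

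Combining this with Theorem~\ref{th:orrbound} shows ORR attains the maximum of $\min_e \bar{\mu}_e^{\pi}$, hence the maximum of $\lambda_i^*(\pi)$, establishing throughput-optimality; deadline-optimality is immediate from Theorem~\ref{th:orrbound}. The main obstacle is the interference counting step: one has to verify carefully that the $\phi$-hop interference model, which is stated in terms of graph distance between links, translates cleanly into a constraint that at most one out of any $\phi+1$ consecutive hops on a simple route can be scheduled per slot. Once that geometric fact is nailed down, the rest of the argument is a one-line pigeonhole plus a direct comparison to Theorem~\ref{th:orrbound}.
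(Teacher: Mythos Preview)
Your proposal is correct and rests on the same key observation as the paper: under $\phi$-hop interference, at most one of any $\phi+1$ consecutive links along the route can be active in a given slot, so $\sum_{\ell=0}^{\phi}\bar{\mu}_{T^{(i)}_{j+\ell}}^{\pi}\le 1$. The paper then sums these inequalities over all windows $j$, obtains a weighted aggregate bound, verifies that ORR makes that aggregate tight, and argues that any deviation from the ORR rates must lower the minimum. Your route is more direct: from a single window constraint, pigeonhole immediately gives $\min_e \bar{\mu}_e^{\pi}\le 1/(\phi+1)$, which ORR attains with equality, so ORR is throughput-optimal. This avoids the term-counting and the tightness/perturbation step entirely, at no cost in generality. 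One small point you should make explicit (the paper does so by redefining $\phi$): when $|T^{(i)}|\le \phi$, there is no full window of $\phi+1$ hops, but then all links on the route mutually interfere and the same averaging over the whole route yields $\min_e \bar{\mu}_e^{\pi}\le 1/|T^{(i)}|$, which ORR (with period equal to the route length in this degenerate case) again meets.
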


\begin{proof}
    See Appendix~\ref{app:jointoptproof}.
\end{proof}

\section{Interference Models}\label{sec:interferencemodels}

Having characterized bounds on the feasibility region and the structure of throughput- and deadline-optimal policies under general interference, we now examine three specific interference models and derive exact results for these cases.

\subsection{No Interference}

When there is no interference, the system is equivalent to a wired network. Every link can be activated in each slot, so $\bar{\mu}_e = 1$ for all $e$, and $\lambda_i^*(\boldsymbol{w}) = \min_{e \in T^{(i)}} w_{i,e}$ from Lemma~\ref{lemma:lambdabound}. Observe that this is equivalent to an $ORR^0(i)$ policy, so $\tau_i^* = |T^{(i)}|$ from Theorem~\ref{th:orrbound}. Then the feasible region of flow $i$ can be defined as
\begin{equation}
    \Lambda_i^0(\boldsymbol{w}) \triangleq \{ (\tau_i,\lambda_i) \ | \ \tau_i \geq |T^{(i)}|, 0 \leq \lambda_i \leq \min_{e \in T^{(i)}} w_{i,e} \},
\end{equation}
and because there is no interference,~\eqref{eq:feasregion} holds with equality and the overall feasible region $\Lambda^0(\boldsymbol{w}) = \prod_{T^{(i)} \in \mathcal{T}} \Lambda^0_i(\boldsymbol{w})$.

\subsection{Total Interference}

Next we consider the total interference scenario, where only one link can be active at a time. For flow $i$, this is equivalent to a $(|T^{(i)}|-1)$-hop interference model. Then from Theorem~\ref{th:orrbound}, $\tau_i^* = |T^{(i)}| + \phi = 2|T^{(i)}| - 1$, and this is achieved using an $ORR^{(T^{(i)}-1)}(i)$ policy. From~\eqref{eq:orrlambda},
\begin{equation}
    \lambda_i^*\big(ORR^{(T^{(i)}-1)}(i), \boldsymbol{w} \big) = \frac{1}{|T^{(i)}|} \min_{e \in T^{(i)}} w_{i,e}.
\end{equation}

The main focus of this work is primary interference, so we only briefly list the properties of throughput optimality for the total interference case, and we omit proofs due to space constraints. Similar to what we will see in the primary interference case, a throughput-optimal policy $\pi$ has a maximum throughput of 
\begin{equation}\label{eq:totalintflambda}
    \lambda_i^*(\pi, \boldsymbol{w}) = \frac{1}{|T^{(i)}|} H(\boldsymbol{w}),
\end{equation}
where $H(\cdot)$ represents the harmonic mean.

The achievable deadlines under a throughput-optimal policy are hard to characterize, but grow with $O(|T^{(i)}|^2)$. One can see this by examining the deadline condition from Corollary~\ref{cor:deadlinecondition}, and dividing both sides by $\lambda_i$. Plugging in~\eqref{eq:totalintflambda} gives the necessary and sufficient condition
\begin{equation}
    \Big\lceil \frac{|T^{(i)}|}{H(\boldsymbol{w})} \sum_{e \in T^{(i)}} Q_{i,e}(t) \Big\rceil \leq \tau_i,
\end{equation}
for meeting deadlines, where the ceiling is taken because deadlines are necessarily integer-valued. It is easy to see that this bound is $O(|T^{(i)}|^2)$.

\subsection{Primary Interference}
We finally turn to primary interference, which is the main focus of this work, and begin by characterizing the deadline-optimal point $\tau_{\min}$ in the feasibility region $\Lambda_i^1(\boldsymbol{w})$. From Theorem~\ref{th:orrbound}, recalling that $\phi=1$ under primary interference, the minimum achievable deadline is $\tau_i^* = |T^{(i)}| + \phi = |T^{(i)}|+1$, which is again achieved using an $ORR^1(i)$ policy. Then from~\eqref{eq:orrlambda}, we have
$\lambda_i^* \big(ORR^1(i), \boldsymbol{w} \big) = \frac{1}{2} \min_{e \in T^{(i)}} w_{i,e}$, and
\begin{equation}
    \tau_{\min} \big( \Lambda_i^1(\boldsymbol{w}) \big) \triangleq \Big( |T^{(i)}|+1, \frac{1}{2} \min_{e \in T^{(i)}} w_{i,e} \Big).
\end{equation} 

Obtaining the throughput-optimal point $\lambda_{\max}$ is less trivial. A throughput-optimal policy can be derived from~\eqref{eq:thoptformulation} by relaxing the link activation constraint to be $\bar{\mu}_e + \bar{\mu}_{e+1} \leq 1$ for all adjacent link pairs $(e,e+1)$. In our single route scenario under primary interference, only adjacent links interfere with one another, so this is a necessary and sufficient condition for feasibility. Then a throughput-optimal set of activations is any solution to
\begin{align}
\begin{aligned}
    \max_{\pi \in \Pi_c} &\min_{e \in T^{(i)}} \bar{\mu}_e^{\pi} w_{i,e} \\
    \text{s.t.} \ &\bar{\mu}_e^{\pi} + \bar{\mu}_{e+1}^{\pi} \leq 1, \ \forall \ 0 \leq e < |T^{(i)}|-1,
\end{aligned}
\end{align}
which can be solved exactly.

\begin{theorem}
    Any throughput-optimal policy $\pi \in \Pi_c$ on $T^{(i)}$ under primary interference constraints has a maximum throughput of 
    \begin{equation}
        \lambda_i^*(\pi, \boldsymbol{w}) = \min_{e \in T^{(i)}} \frac{w_{i,e}{w_{i,e+1}}}{w_{i,e} + w_{i,e+1}}
    \end{equation}
\end{theorem}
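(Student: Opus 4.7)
The plan is to relax the per-slot interference constraint to an average-rate inequality, solve the resulting optimization by a pairwise argument, and then show the bound is tight by exhibiting a cyclic schedule that realizes it. For the upper bound, fix any $\pi \in \Pi_c$ with period $K^\pi$. Under primary interference, $\mu_e^\pi(t) + \mu_{e+1}^\pi(t) \leq 1$ holds for every adjacent pair $(e,e+1)$ on $T^{(i)}$ and every slot $t$, so averaging over one period yields $\bar\mu_e^\pi + \bar\mu_{e+1}^\pi \leq 1$. For a single adjacent pair, the one-dimensional problem of maximizing $\min(a\, w_{i,e},\, b\, w_{i,e+1})$ subject to $a+b \leq 1$ and $a,b \geq 0$ is elementary: the maximum is attained when both arguments of the minimum are equal and $a+b=1$, giving value $\frac{w_{i,e}w_{i,e+1}}{w_{i,e}+w_{i,e+1}}$. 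Hence $\min(\bar\mu_e^\pi w_{i,e},\, \bar\mu_{e+1}^\pi w_{i,e+1}) \leq \frac{w_{i,e}w_{i,e+1}}{w_{i,e}+w_{i,e+1}}$ for every $e$, and combining with Lemma 1 yields the claimed upper bound on $\lambda_i^*(\pi)$.

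For achievability, let $\lambda^*$ denote the claimed value and define target averages $\bar\mu_e := \lambda^*/w_{i,e}$ for each $e \in T^{(i)}$. The definition of $\lambda^*$ immediately gives $\bar\mu_e + \bar\mu_{e+1} \leq 1$ for every adjacent pair. I would then choose a common denominator $K$ so that each $n_e := K \bar\mu_e$ is a non-negative integer, and construct an explicit cyclic schedule of length $K$ that activates link $e$ exactly $n_e$ times while respecting primary interference. Because the route is a path, this can be done by a direct greedy interleaving argument (e.g., processing links in order and filling in the $n_e$ slots while avoiding the already-assigned slots of $e-1$ and $e+1$); alternatively one can invoke the multigraph-style link-scheduling results of Hajek and Sasaki to guarantee existence. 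The resulting policy $\pi^\star \in \Pi_c$ satisfies $\bar\mu_e^{\pi^\star} = \bar\mu_e$, so by Lemma 1, $\lambda_i^*(\pi^\star) = \lambda^*$, confirming that any throughput-optimal policy attains exactly this value.

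The upper bound is a routine pairwise calculation, so the main obstacle is the achievability step: converting an LP-feasible average-rate vector into a concrete per-slot matching schedule that simultaneously respects primary interference along the entire route. The chain structure of the constraint $\bar\mu_e + \bar\mu_{e+1} \leq 1$ makes this tractable on a single path, but it is the one place in the argument where we must move from the relaxation back into the actual policy class $\Pi_c$, and a rigorous write-up should either exhibit the explicit interleaving construction or cite a schedulability result to close the gap.
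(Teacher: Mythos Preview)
Your proposal is correct and follows essentially the same approach as the paper: relax the per-slot primary-interference constraint to the pairwise average constraint $\bar\mu_e + \bar\mu_{e+1} \leq 1$, solve the two-variable max--min to get $\frac{w_{i,e}w_{i,e+1}}{w_{i,e}+w_{i,e+1}}$ for each adjacent pair, take the minimum over pairs for the upper bound, and then exhibit a feasible rate vector and appeal to Hajek--Sasaki for schedulability. The only minor difference is in the achievability rate vector: you set $\bar\mu_e = \lambda^*/w_{i,e}$ directly, whereas the paper assigns to each link the smaller of the two pair-optimal rates from its left and right neighbors; your choice is a bit cleaner and the feasibility check $\lambda^*(1/w_{i,e}+1/w_{i,e+1}) \leq 1$ is immediate from the definition of $\lambda^*$, but both constructions land in the same place and both close the argument by invoking the same schedulability result.
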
 

\begin{proof}
    As noted previously, the interference constraint for a single route under primary interference is $\bar{\mu}_e + \bar{\mu}_{e+1} \leq 1$ for all $(e,e+1)$, and from Lemma~\ref{lemma:lambdabound}, $\lambda_i^*(\pi,\boldsymbol{w}) = \min_{e \in T^{(i)}} \bar{w}_{i,e}^{\pi}$. Then the maximum throughput which can pass through $(e,e+1)$ is the solution to
    \begin{align}
    \begin{aligned}\label{eq:linkpairthopt}
        \max_{\bar{\mu}_e,\bar{\mu}_{e+1}} \ &\min \{\bar{\mu}_e w_{i,e}, \bar{\mu}_{e+1} w_{i,e+1} \} \\
        \text{s.t.} \ &\bar{\mu}_e + \bar{\mu}_{e+1} \leq 1.
    \end{aligned}
    \end{align}
    Because $\bar{\mu_e}$ and $\bar{\mu_{e+1}}$ are continuous, the solution occurs when $\bar{\mu}_e + \bar{\mu}_{e+1} = 1$ and $\bar{\mu}_e w_{i,e} = \bar{\mu}_{e+1} w_{i,e+1}$. Solving this set of equations yields
    \begin{equation}\label{eq:linkpairthoptsol}
        \bar{\mu}_e = \frac{w_{i,e+1}}{w_{i,e}+w_{i,e+1}}, \ \bar{\mu}_{e+1} = \frac{w_{i,e}}{w_{i,e}+w_{i,e+1}},
    \end{equation}
    and the maximum throughput which can pass through the pair of links is
    \begin{equation}
        \bar{\mu}_e w_{i,e} = \bar{\mu}_{e+1} w_{i,e+1} = \frac{w_{i,e} w_{i,e+1}}{w_{i,e}+w_{i,e+1}}.
    \end{equation}

    Now define the pair of links $(e^*,e^*+1)$ which minimize this quantity as the bottleneck link pair, and $\lambda_i^*(\pi)$ as the bottleneck throughput. Each link in $T^{(i)}$ belongs to two pairs, except the first and last link on the route. Define a set of throughput-optimal activations as follows. For each link $e$, find the two throughput-optimal activation rates from~\eqref{eq:linkpairthoptsol} belonging to its two link pairs, and assign the smaller of these to $e$, so that 
    \begin{multline}\label{eq:minneighboractivation}
        \bar{\mu}_e^{\pi^*} = \min \Big\{ \frac{w_{i,e-1}}{w_{i,e}+w_{i,e-1}}, \ \frac{w_{i,e+1}}{w_{i,e}+w_{i,e+1}} \Big\}, \ \forall 1 \leq e < -1, \\ \bar{\mu}_0^{\pi^*} = \frac{w_{i,1}}{w_{i,0}+w_{i,1}}, \ \bar{\mu}_{-1}^{\pi^*} = \frac{w_{i,-2}}{w_{i,-1}+w_{i,-2}},
    \end{multline}
    where we slightly abuse notation to denote $|T^{(i)}|-j$ as $-j$. Then each link supports at least as much throughput as $\lambda_i^*(\pi,\boldsymbol{w})$, because by definition this is the bottleneck throughput, and this set of activations is guaranteed to meet scheduling constraints because each activation pair in~\eqref{eq:linkpairthoptsol} is a feasible solution to~\eqref{eq:linkpairthopt}, and each link in the pair either uses this rate or a strictly smaller one. This completes the proof.
\end{proof}

If the activation rates are feasible, algorithms exist to find a throughput-optimal policy $\pi^*$ in polynomial time~\cite{hajek_link_1988}. In order to ensure that rates are met and the number of activations $\eta_e^{\pi^*} = \bar{\mu}_e^{\pi^*} K^{\pi^*}$ is integer, the schedule length $K^{\pi^*}$ must be at least as large as the least common multiple of the denominators of the activation rates, which can become arbitrarily large.

While $\pi^*$ is guaranteed to be throughput-optimal if it exists, it is not deadline-minimizing in general. Finding a deadline-minimizing policy for this throughput is more difficult because, while throughput optimality requires only time average constraints, we have seen that deadline guarantees are largely dependent on schedule order. When slice widths are equal, $\bar{\mu}_e^{\pi^*} = \frac{1}{2}$ for all $e$, and the throughput-optimal point converges to the deadline-optimal (and therefore deadline-minimizing) point as described in Corollary~\ref{cor:jointoptimalpoint}. For general slice widths, we have the following result.

\begin{theorem}\label{th:singleflowcomp}
    Solving for a deadline-minimizing policy $\pi$ and corresponding $\tau_i^*(\pi, \boldsymbol{w}, \lambda_i)$, for a given throughput $\lambda_i$, has complexity which is exponential in $|T^{(i)}|$.
\end{theorem}

\begin{proof}
    In the general case, finding a deadline-minimizing policy is equivalent to finding a min-max cost cycle on a graph. Define a directed graph $G^{(i)} = (V^{(i)}, E^{(i)})$, where each vertex $v \in V^{(i)}$ represents a possible state of queue sizes across the route $T^{(i)}$, and each edge $e \in E^{(i)}$ represents a valid action, or set of links which can be activated together. An edge $e$ exists from state $s_t$ to $s_{t+1}$ if taking action $e$ while in state $s_t$ leads to state $s_{t+1}$.

    Define the cost of each edge $e$ that ends in state $s_t$ to be $\xi_e = \sum_{l \in T^{(i)}} Q_{i,l}(t)$, where $s_t$ corresponds to the set of queue sizes $\{ Q_{i,l}(t), \ \forall l \in T^{(i)} \}$. Because our policy is guaranteed to be cyclic, we require that $Q_{i,l}(t) = Q_{i,l}(t+K)$ for all $l \in T^{(i)}$ and a schedule of length $K$, and for all $t$ while the network is in steady state. Therefore, a valid policy in steady state is a cycle $\mathcal{C}^{(i)}$ on the graph $G^{(i)}$, and from Corollary~\ref{cor:deadlinecondition}, any such policy meets a deadline $\tau_i$ when $\max_{e \in \mathcal{C}^{(i)}} \xi_e \leq \lambda_i \tau_i$.

    Under this equivalent problem,
    \begin{equation}\label{eq:cycleopt}
        \tau_i^*(\pi, \boldsymbol{w}, \lambda_i) = \frac{1}{\lambda_i} \min_{\mathcal{C}^{(i)}} \max_{e \in \mathcal{C}^{(i)}} \xi_e
    \end{equation}
    for a throughput-optimal policy $\pi$. Solving~\eqref{eq:cycleopt} is at least as hard as deciding whether there exists a cycle $\mathcal{C}^*$ such that $\frac{1}{\lambda_i} \max_{e \in \mathcal{C}^*} \xi_e \leq \tau_i'$ for some $\tau_i'$, because the solution to~\eqref{eq:cycleopt} immediately provides a solution to the decision problem. This decision problem, in turn, is at least as hard as deciding whether there exists a $\mathcal{C}^*$ such that $\frac{1}{\lambda_i |\mathcal{C}^*|} \sum_{e \in \mathcal{C}^*} \xi_e \leq \tau_i'$, because if a cycle exists that satisfies the max condition it immediately satisfies the average condition. Combining these we observe that finding $\tau_i^*(\pi, \boldsymbol{w}, \lambda_i)$ is at least as hard as finding a cycle on $G^{(i)}$ with bounded average cost.

    In 1978, Karp developed an $O(|V^{(i)}||E^{(i)}|)$ algorithm for this problem~\cite{karp1978characterization}, which is still the most efficient known algorithm. Let $q_{e}$ be the number of discrete sizes that $Q_{i,e}$ can take, and let $q = \min_e q_e$ be the smallest such value. Then $|V^{(i)}| \geq q^{|T^{(i)}|}$. There is at least one valid action in each state, so we conclude that the best known algorithm for finding $\tau_i^*(\pi,\boldsymbol{w},\lambda_i)$ has complexity at least $O(q^{2|T^{(i)}|})$.
\end{proof}

The complexity of this problem illustrates the difficulty of minimizing deadlines for general policies even in this simplest case of a single route. This is especially unfortunate because, when schedule order is not optimized, packet delays can become large for a given schedule length and set of activation rates. To see this, we first introduce the concept of resource-minimizing slices, which follows from Lemma~\ref{lemma:lambdabound}. 

\begin{defn}
    A set of slices $\boldsymbol{w}$ is resource-minimizing for a given policy $\pi \in \Pi_c$ and a set of flows $\mathcal{F}$ if $w_{i,e} = \frac{\lambda_i}{\bar{\mu}_e^{\pi}}$ for all $f_i \in \mathcal{F}$ and $e \in T^{(i)}$.
\end{defn}

Note that this implies a resource-minimizing slice is the smallest feasible slice which can support the given throughput under $\pi$. This definition is useful for the following Lemma.

\begin{lemma}\label{lemma:emptyqueues}
    If a policy $\pi \in \Pi_c$ supports a set of flows $\mathcal{F}$ under primary interference, and slices are resource-minimizing, then for every $f_i \in \mathcal{F}$ and $e \in T^{(i)}$, there exists at least one slot $t_{i,e}$ in each scheduling period where $Q_{i,e}(t_{i,e}) = 0$.
\end{lemma}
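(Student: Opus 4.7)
My plan is to reduce the dynamics of each queue $Q_{i,e}$ to a standard Lindley recursion whose driving sequence is determined by the schedule alone, and then invoke the Lindley closed form to exhibit an empty slot in every cycle. The main crux is a \emph{full-utilization} property: under resource-minimizing slices, every activation of link $e$ in the cyclic steady state must transmit exactly $w_{i,e}$ packets. To see this, in one period $K^\pi$ the total arrivals to $Q_{i,e}$ equal the total departures (by periodicity of $Q_{i,e}$), and by induction on $e$ starting from the source where arrivals are $\lambda_i K^\pi$, this common value is $\lambda_i K^\pi$ for every link on the route. The total service capacity at link $e$ over a period is $\eta_e^\pi w_{i,e} = \bar{\mu}_e^\pi K^\pi w_{i,e}$, which by the resource-minimizing identity $w_{i,e} = \lambda_i/\bar{\mu}_e^\pi$ also equals $\lambda_i K^\pi$. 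Since departures cannot exceed capacity but must match arrivals, every scheduled slot transmits at full capacity.

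With full utilization secured, and using primary interference to write $\mu_{e-1}^\pi(t)\mu_e^\pi(t)=0$, I would verify directly that the queue evolves as
\begin{equation*}
Q_{i,e}(t+1) = \max\{0,\; Q_{i,e}(t) + X_e(t)\},
\end{equation*}
with $X_e(t) = w_{i,e-1}\mu_{e-1}^\pi(t) - w_{i,e}\mu_e^\pi(t)$ for $e \geq 1$ and $X_0(t) = \lambda_i - w_{i,0}\mu_0^\pi(t)$ at the source. Resource-minimizing forces $X_e$ to be periodic with period $K^\pi$ and mean zero, so its cumulative sum $G_e(t) = \sum_{u<t} X_e(u)$ is itself periodic once steady state is reached. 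The standard Lindley closed form, starting from $Q_{i,e}(0)=0$, then gives $Q_{i,e}(t) = G_e(t) - \min_{s \leq t} G_e(s)$.

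Taking $t_{i,e}^*$ to be any slot in a period that attains the minimum of $G_e$, the displayed formula yields $Q_{i,e}(t_{i,e}^*) = 0$, completing the proof. The main obstacle is the full-utilization step, since a priori the arrivals into an intermediate queue are coupled to the upstream queue's state. The cyclic balance argument decouples them by forcing both per-cycle arrivals and per-cycle service at every link to coincide exactly with $\lambda_i K^\pi$; this is where the resource-minimizing hypothesis does its real work, and once this coupling is broken the remainder is a textbook Lindley calculation.
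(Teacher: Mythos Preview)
Your route via the Lindley closed form is genuinely different from the paper's proof, which is an elementary induction: starting from any slot with $Q_{i,e}(t)=0$, the paper argues that within the next $K^\pi$ slots the queue must empty again---either some activation finds fewer than $w_{i,e}$ packets (done), or all of the first $\eta_e^\pi-1$ activations serve a full slice, leaving at most $w_{i,e}$ packets for the final activation, which then clears the queue (primary interference prevents a simultaneous arrival from $e-1$). No full-utilization property or Lindley representation is invoked.

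There is, however, a real gap in your argument. Full utilization at link $e-1$ is established only in the periodic steady state, since it relies on $Q_{i,e-1}(t+K^\pi)=Q_{i,e-1}(t)$; during ramp-up the upstream queue may well hold fewer than $w_{i,e-1}$ packets when scheduled, and the arrival term into $Q_{i,e}$ is then strictly less than $w_{i,e-1}\mu_{e-1}^\pi(t)$. Consequently the recursion $Q_{i,e}(t+1)=\max\{0,Q_{i,e}(t)+X_e(t)\}$ does \emph{not} describe the true queue for all $t\ge 0$, and you are not entitled to the Lindley closed form starting from $Q_{i,e}(0)=0$. The patch is short: observe that the Lindley process driven by $X_e$ from $t=0$ \emph{dominates} the true queue, since true arrivals at $e$ never exceed $w_{i,e-1}\mu_{e-1}^\pi(t)$ and the map $q\mapsto\max\{0,q+x\}$ is monotone in both $q$ and $x$. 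The dominating process vanishes at every argmin of $G_e$, forcing the true queue to vanish there as well. With this one additional line your argument is complete, and it has the pleasant feature of locating the empty slot explicitly as the minimizer of a schedule-determined partial sum, whereas the paper's case analysis is nonconstructive in that respect.
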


\begin{proof}
    See Appendix~\ref{app:emptyqueueproof}.
\end{proof}

This result helps bound worst-case delay because slices can always be made larger without impacting performance, simply by treating the slice as if it was the original size. Therefore, without loss of generality we can restrict slices to be their smallest feasible size, i.e. resource-minimizing, in our worst-case delay analysis. Then Lemma~\ref{lemma:emptyqueues} says each queue must be emptied at least once per scheduling period under any policy which maximizes delay. This allows us to bound packet delay as follows.

\begin{theorem}\label{th:deadlinebound}
    Under any policy $\pi \in \Pi_c$,
    \begin{equation}
        \tau_i^*(\pi, \boldsymbol{w}, \lambda_i) \leq K^{\pi} \sum_{e \in T^{(i)}} (1 - \bar{\mu}_e^{\pi}) + 1
    \end{equation}
    under primary interference constraints, for any slice widths $\boldsymbol{w}$ and feasible throughput $0 < \lambda_i \leq \lambda_i^*(\pi,\boldsymbol{w})$. Moreover, there exists at least one policy where this bound is tight, and many policies where $\tau_i^*$ grows linearly with $K^{\pi}$.
\end{theorem}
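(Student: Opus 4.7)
The plan is to combine a complementary-network reduction, Corollary~\ref{cor:deadlinecondition}, and Lemma~\ref{lemma:emptyqueues} to bound the maximum queue sum over one period, and then to construct a reverse-ordered schedule to demonstrate tightness.

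First I would reduce to the resource-minimizing case. If $\pi$ supports a throughput $\lambda_i$ with the original slice widths $\{w_{i,e}\}$, then replacing each $w_{i,e}$ by the smaller value $\tilde w_{i,e} = \lambda_i / \bar{\mu}_e^{\pi}$ cannot decrease any packet delay, since smaller slices serve at most as fast. Hence any upper bound on $\tau_i^*$ derived in the complementary network applies in the original. By Corollary~\ref{cor:deadlinecondition}, it then suffices to show, in the complementary network,
\[
    \max_{t \geq 0} \sum_{e \in T^{(i)}} Q_{i,e}^{\pi}(t) \leq \lambda_i \Bigl[K^{\pi} \sum_{e \in T^{(i)}} (1 - \bar{\mu}_e^{\pi}) + 1\Bigr].
\]

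Second, I would interpret $\sum_e Q_{i,e}(t)$ as the lag of cumulative destination deliveries behind the fluid arrival rate $\lambda_i t$, plus an additive $\lambda_i$ for the in-slot arrival convention. Under resource-minimizing slices, every link serves at an average rate of exactly $\lambda_i$ per slot, and Lemma~\ref{lemma:emptyqueues} guarantees that each $Q_{i,e}$ empties at some slot per period, so the lag is periodic. The lag grows by $\lambda_i$ in any slot where the destination fails to deliver, and I would amortize each such slot against an idle slot of some link on the route. If the destination fires but $Q_{i,L-1}$ is empty, the packet that would have been served must be stalled upstream at a queue whose most recent empty slot occurred in the current period, and Lemma~\ref{lemma:emptyqueues} applied recursively lets me identify an idle link $e$ whose $K^{\pi}(1-\bar{\mu}_e^{\pi})$ idle-slot budget is not yet exhausted. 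Summing across links yields the bound $\lambda_i K^{\pi} \sum_e (1-\bar{\mu}_e^{\pi})$.

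For tightness, I would instantiate the reverse-ordered schedule suggested in Section~\ref{sec:intro}: activate the links of $T^{(i)}$ in reverse route order each period, with activation counts $\eta_e^{\pi} = \bar{\mu}_e^{\pi} K^{\pi}$. Under such a schedule every packet waits through the full idle stretch at each hop, saturating the per-link charges and making the inequality tight up to the ceiling. Scaling this construction, by repeating the base schedule or by choosing small $\bar{\mu}_e^{\pi}$, shows that $\tau_i^*$ indeed grows linearly with $K^{\pi}$ for a broad family of reverse-ordered policies.

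The main obstacle is the charging argument in the second step. A naive per-link bound on $\max_t Q_{i,e}(t)$ followed by summation loses an additive $L \lambda_i$ in place of the claimed $\lambda_i$, because the individual queue maxima are not attained simultaneously. Tightening the additive term requires the cascading amortization described above, using Lemma~\ref{lemma:emptyqueues} to ensure that every empty downstream queue can be traced to a distinct ``responsible'' idle link upstream, so that idle link-slots are charged at most once per period.
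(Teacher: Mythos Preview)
Your reduction to resource-minimizing slices and your tightness construction (block/reverse-ordered scheduling) coincide with the paper's. The divergence is in the upper bound, where the paper takes a more direct route that avoids the charging obstacle you identify.

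After the reduction, the paper does not pass through Corollary~\ref{cor:deadlinecondition} and a queue-sum bound. Instead it strengthens Lemma~\ref{lemma:emptyqueues}: under resource-minimizing slices in steady state, not only does each queue empty once per period, but \emph{every} activation of every link serves a full slice of exactly $w_{i,e}$ packets. (Exactly $\eta_e^{\pi} w_{i,e}=\lambda_i K^{\pi}$ packets must cross each link per period, the queue empties once, and there are only $\eta_e^{\pi}$ activations, so none can serve fewer than $w_{i,e}$ without violating periodicity.) With this in hand the paper tracks a single tagged packet---the first one to reach link $e$ after the empty slot $t_{i,e}$---and bounds its per-hop wait: it is served at the first activation of $e$ after $t_{i,e}$, which in the worst case is $K^{\pi}-\eta_e^{\pi}=K^{\pi}(1-\bar\mu_e^{\pi})$ slots later (all activations bunched at the end of the period). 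Because it is served first it is also the first to arrive at $e{+}1$, so the per-hop worst cases simply add, giving $\sum_e K^{\pi}(1-\bar\mu_e^{\pi})+1$ with no amortization. The block schedule then makes this tight by construction.

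Your queue-sum route is not wrong in principle, but the cascading amortization you describe is not yet an argument: nothing in the sketch prevents the same upstream idle slot from being the ``responsible'' link for several downstream stalls, and Lemma~\ref{lemma:emptyqueues} alone (one empty slot per link per period) does not supply the injectivity the charging needs. The paper's per-packet tracking sidesteps this entirely. If you want to complete the queue-sum route, the missing ingredient is precisely the full-slice-per-activation property above, which turns cumulative service at each link into an exact count rather than something to be amortized, and collapses the additive slack from $|T^{(i)}|\lambda_i$ to $\lambda_i$.
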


\begin{proof}
    See Appendix~\ref{app:deadlineboundproof}.
\end{proof}

We can now formally define the throughput-optimal point (achievable under any schedule order) as 
\begin{multline}
    \lambda_{\max} \big( \Lambda_i^1(\boldsymbol{w}) \big) \triangleq \\
    \Big( K^{\pi} \sum_{e \in T^{(i)}} (1 - \bar{\mu}_e^{\pi}) + 1, \min_{e \in T^{(i)}} \frac{w_{i,e}{w_{i,e+1}}}{w_{i,e} + w_{i,e+1}} \Big).
\end{multline}

We have seen that $K^{\pi}$ can grow arbitarily large for general activation rates, so the bound in Theorem~\ref{th:deadlinebound} does not provide strong deadline guarantees. In fact, it shows that there exists at least one policy $\pi \in \Pi_c$ where packets experience delays within a constant factor of $K^{\pi} |T^{(i)}|$, and many policies where packets experience delays that grow with $K^{\pi}$. These policies are not pathological examples, but rather any policy with inter-scheduling times that grow with the schedule length. This includes policies that schedule each link in one contiguous block per scheduling period (as in~\cite{djukic2007quality} and the following line of work), or simply random orderings of activations which happen to have large inter-scheduling times. We formalize this idea in the next section, and to avoid delays that grow with $K^{\pi}$, we introduce policies with bounded inter-scheduling times. We show that by restricting ourselves to this class of policies, we can make deadline guarantees that are independent of the schedule length without sacrificing thoughput.

\begin{figure}
    \centering
    \includegraphics[width=0.45\textwidth]{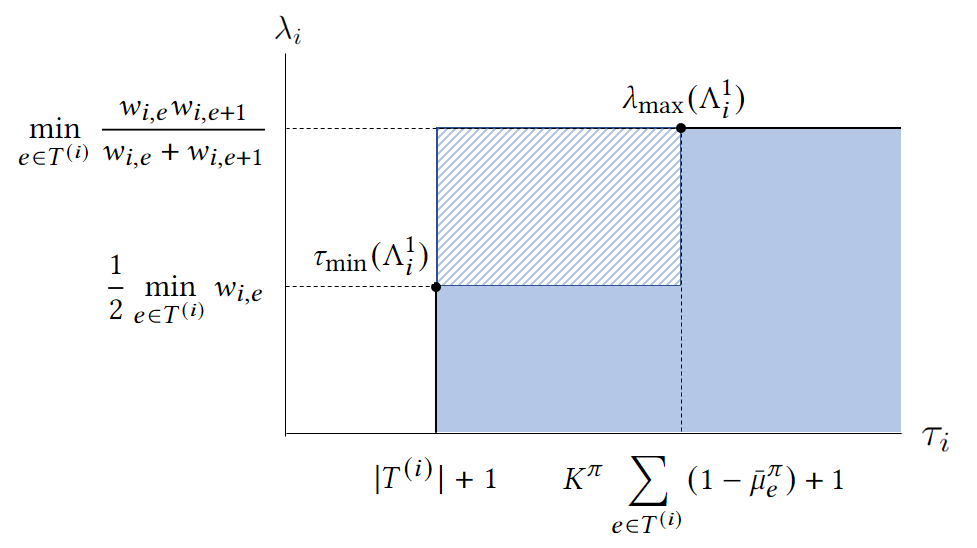}
    \caption{Feasible Region Under Primary Interference}
    \label{fig:lambda-primary-interf}
\end{figure}

In Figure~\ref{fig:lambda-primary-interf}, we show the feasible throughput/deadline region for a flow under primary interference, highlighting the deadline-optimal point and the throughput-optimal point with the upper bound on packet delay from Theorem~\ref{th:deadlinebound}. The boundary in the shaded region between these two operating points is difficult to characterize, as shown in Theorem~\ref{th:singleflowcomp}. However, we can show it is monotonic, and we provide a brief proof sketch.

Consider a network with some throughput $\lambda_i$, and with minimum deadlines and associated policy $\pi$. Now imagine a second identical network with some $\lambda_i' \leq \lambda_i$ under the same policy $\pi$. Each packet in the second network sees the same service opportunities as the first, and has the same number or fewer packets ahead of it in its queue at all times. Therefore, it cannot see a larger delay than what is seen by packets in the original network, which is by definition the original minimum deadline. The minimum deadline in the second network must be at least as small as the largest delay that any packet sees, so it must be at least as small as the original minimum deadline. Therefore, decreasing throughput cannot lead to larger deadlines, and the boundary of the region is monotonic between the throughput- and deadline-optimal points.

\section{Scheduling for Service Guarantees}\label{sec:generalflows}

We now move to the problem of scheduling multiple flows in a general network under primary interference, drawing on the feasibility results of the previous section. As a secondary goal, we seek to keep slices as close as possible to resource-minimizing. This allows the network to support more deadline-constrained traffic, as well as more best-effort traffic with the unallocated capacity. We have seen that it is challenging to design policies with deadline guarantees that are tighter than the universal upper bound in Theorem~\ref{th:deadlinebound}. This is true in the case of a solitary flow, as shown in Theorem~\ref{th:singleflowcomp}, and certainly remains true in the general case with multiple flows.

\begin{theorem}\label{th:generalcomplexity}
    Given a set of flows $\mathcal{F}$ with non-trivial deadlines (i.e., tighter than the bound in Theorem~\ref{th:deadlinebound}), finding a policy $\pi \in \Pi_c$ that supports $\mathcal{F}$ has complexity which is exponential in $O \big(\sum_{f_i \in \mathcal{F}} |T^{(i)}| \big)$.
\end{theorem}

\begin{proof}
    See Appendix~\ref{app:complexityproof}.
\end{proof}

\subsection{Delay Deficit}

From Theorem~\ref{th:generalcomplexity}, it is clear that searching over all policies in $\Pi_c$ is intractable. To narrow our search, it will prove helpful to identify a tighter bound on packet delays as a function of the scheduling policy. Using a similar argument to that in Lemma~\ref{lemma:taulowerbound}, which gives a lower bound on $\tau_i^*(\pi)$ in terms of minimum inter-scheduling times, we will show an upper bound on $\tau_i^*(\pi)$ subject to conditions on \textit{maximum} inter-scheduling times.

To do so, we introduce a concept called delay deficit. Intuitively, delay deficit provides a quota of worst-case delay which a packet should expect to see at a link under some scheduling assumptions, and it tracks how long each packet has been in the network relative to this quota. Define the age of packet $l$, i.e., the time since it arrived in the network, at time $t$ as $a^l(t)$, and without loss of generality let packet $l$ belong to $f_i$. Denote the delay quota for link $e$ under policy $\pi \in \Pi_c$ as $\sigma_e^{\pi}$.

The delay deficit of a packet $l$ currently enqueued at link $T^{(i)}_j$ at time $t$ is defined as 
\begin{equation}\label{eq:delaydeficit}
    \delta^l(t) \triangleq a^l(t) - \sum_{j'=0}^{j-1} \sigma_{i,j'}^{\pi},
\end{equation}
where we slightly abuse notation to let $\sigma_{i,j}$ denote the delay quota of $T^{(i)}_j$. This quantity evolves in the following way. When packet $l$ arrives at the source, $\delta^l(t) = 0$. Then, it is incremented by one in each subsequent slot until it reaches its destination, and it is decremented by $\sigma_e^{\pi}$ whenever it is served at link $e$. A negative delay deficit signifies that a packet has spent less than its allocated time on its route thus far, so it is ahead of schedule. A delay deficit larger than the delay quota at a packet's current link signifies that the packet is behind schedule.

By induction on the delay deficit at each link, a packet's end-to-end delay can be bounded by the sum of delay quotas along its route, under certain conditions on slice widths. In particular, it leads to the following bound.

\begin{theorem}\label{th:maindelaydefbound}
    Under any policy $\pi \in \Pi_c$, with slice widths $w_{i,e} \geq \lambda_i \overline{k}_e^{\pi}$ for all $f_i \in \mathcal{F}$ and $e \in T^{(i)}$,
    \begin{equation}\label{eq:tauupperbound}
        \tau_i^*(\pi,\boldsymbol{w},\lambda_i) \leq \sum_{e \in T^{(i)}} \overline{k}_e^{\pi}, \ \forall \ f_i \in \mathcal{F},
    \end{equation}
    where $\overline{k}_e^{\pi}$ is the maximum inter-scheduling time of link $e$ under $\pi$.
\end{theorem}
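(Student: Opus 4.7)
Proof plan: I would use the delay deficit framework from the preceding subsection, setting the per-link delay quota $\sigma_e^{\pi} := \overline{k}_e^{\pi}$ for each $e \in T^{(i)}$. The core claim, proved by induction on the hop index $j$, is that every packet $l$ of flow $f_i$ is served at link $T_j^{(i)}$ by time $t_0^l + \sum_{j'=0}^{j} \overline{k}_{j'}^{\pi} - 1$, where $t_0^l$ denotes $l$'s source-arrival slot; equivalently, the delay deficit at that service time satisfies $\delta_{i,j}^l \leq \sigma_j - 1$. Specializing at $j = |T^{(i)}|-1$ and adding the one-slot delivery transit from the destination link then yields $\tau_i^*(\pi, \lambda_i) \leq \sum_{e \in T^{(i)}} \overline{k}_e^{\pi}$.

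The base case $j=0$ is direct: the slice width $w_{i,0} = \lambda_i \overline{k}_0^{\pi}$ exactly absorbs the worst-case batch of arrivals that can accumulate at $Q_{i,0}$ between two consecutive activations of link $T_0^{(i)}$, so $Q_{i,0}$ is cleared at every activation and each packet waits at most $\overline{k}_0^{\pi} - 1$ slots before being served.

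For the inductive step I would combine the FCFS service discipline with the slice-width condition at link $j+1$. The induction hypothesis tells us that all packets of rank at most $k$ in source-arrival order have been served at link $j$ --- and hence arrived at $Q_{i,j+1}$ --- by time $t_0^{(k)} + \sum_{j' \leq j} \overline{k}_{j'}^{\pi}$, which gives the lower bound $D_j(\tau) \geq \lambda_i \bigl(\tau - \sum_{j' \leq j} \overline{k}_{j'}^{\pi} + 1\bigr)$ on the cumulative departures from link $j$. Combining this with the rate-matching property that $w_{i,j+1} = \lambda_i \overline{k}_{j+1}^{\pi}$ delivers service capacity equal to $\lambda_i$ averaged over any $\overline{k}_{j+1}^{\pi}$-slot window, and using the relation $A_{j+1}(\tau) = D_j(\tau - 1)$, lifts the same kind of bound to $D_{j+1}$ and shows that the rank-$k$ packet is served at $Q_{i,j+1}$ by time $t_0^{(k)} + \sum_{j' \leq j+1} \overline{k}_{j'}^{\pi} - 1$, closing the induction.

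The main technical obstacle is this inductive step. The tempting per-link bound ``the wait at $Q_{i,j+1}$ is at most $\overline{k}_{j+1}^{\pi} - 1$'' is in fact \emph{false} in general: whenever upstream link $j$ delivers a burst larger than the downstream slice width, $Q_{i,j+1}$ can exceed $w_{i,j+1}$ and an individual packet may sit through multiple activations of link $j+1$ before being served. The amortization that rescues the delay bound is that any packet with such an abnormally long wait must have traversed the earlier links strictly faster than its quotas allowed, and therefore arrives at the slow link with strongly negative delay-deficit credit. Formalizing this trade-off requires working at the level of the cumulative arrival and departure curves through $A_{j+1}(\tau) = D_j(\tau-1)$ and the induction hypothesis on $D_j$, rather than through any per-packet or per-link wait inequality, and that bookkeeping is the main work of the proof.
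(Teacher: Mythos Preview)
Your plan is sound and lands on the same invariant as the paper (every packet reaches each hop with non-positive delay deficit), but the mechanism you propose for the inductive step is genuinely different from the one the paper uses.

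The paper also sets $\sigma_e^{\pi}=\overline{k}_e^{\pi}$, but it does \emph{not} pass through cumulative arrival/departure curves or a Lindley-type argument. Instead, it works per link with a simple \emph{counter}: the number of packets currently enqueued at link $e$ whose delay deficit is strictly positive. The key observation is that, regardless of how bursty the arrivals into $Q_{i,e}$ are, at most $\lambda_i$ packets network-wide can cross from non-positive to positive deficit in any one slot (because exactly $\lambda_i$ packets attain any given age per slot). Hence between two consecutive activations of link $e$---at most $\overline{k}_e^{\pi}$ slots apart---the counter can increase by at most $\lambda_i\overline{k}_e^{\pi}=w_{i,e}$. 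An inner induction on the activation times of link $e$ then shows that every activation clears all positive-deficit packets; since FCFS order coincides with age order along a single route, this gives $\delta^l\le 0$ upon arrival at the next hop, closing the outer induction on hops.

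This counter device absorbs exactly the burst phenomenon you correctly flag, but without any busy-period or service-curve bookkeeping: burst packets arrive with strongly negative deficit and simply do not enter the counter until they have aged, by which time the earlier positive-deficit packets have already been served. Your cumulative-curve route via $A_{j+1}(\tau)=D_j(\tau-1)$ is a legitimate alternative---essentially a network-calculus proof of the same bound---but closing the step from a lower bound on $D_j$ to one on $D_{j+1}$ rigorously requires the Lindley-style analysis you allude to, whereas the paper's counter argument reduces the inner step to a two-line induction.
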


\begin{proof}
    Define the delay quota at each link $e$ as $\sigma_e^{\pi} = \overline{k}_e^{\pi}$. Then the bound in~\eqref{eq:tauupperbound} becomes $\tau_i^*(\pi,\boldsymbol{w},\lambda_i) \leq \sum_{e \in T^{(i)}} \sigma_{i,e}^{\pi}$.

    Under these delay quotas, we claim that if a packet $l$ arrives at link $e$ at time $t$ and $\delta^l(t) \leq 0$, then at whatever time $t'$ it arrives at link $e+1$, $\delta^l(t') \leq 0$. To see this, assume each queue has a counter which tracks the number of packets with strictly positive delay deficit. Our assumption is that packets arrive at link $e$ with a negative delay deficit, so they are not added to the counter on arrival. From~\eqref{eq:delaydeficit}, a strictly positive delay deficit implies that a packet's age $a^l(t) > \sum_{j=0}^{e-1} \sigma_{i,j}^{\pi}$ at time $t$, and because $\lambda_i$ packets arrive at the source link in each slot, at most $\lambda_i$ packets can reach this age and be added to the counter in each slot.

    Now we claim that when any link $e$ is scheduled, it serves all packets with positive delay deficits, and will show this by induction. Assume that link $e$ is scheduled at time $t$ and it serves all packets with $\delta(t) > 0$. Then the next time it is scheduled at time $t'$, it again must serve all packets with $\delta(t') > 0$. To see this, recall that link $e$ is scheduled at least every $\overline{k}_e^{\pi}$ slots by definition, so $t' \leq t+\overline{k}_e^{\pi}$. Then at most $\lambda_i \overline{k}_e^{\pi}$ packets can have a positive delay deficit at $t'$, and because $w_{i,e} = \lambda_i \overline{k}_e^{\pi}$, all these packets are served. The first time that link $e$ is scheduled, there can be no more than $\lambda_i \overline{k}_e^{\pi}$ packets in the network, so it necessarily serves all packets with positive delay deficits, and this completes the induction step.

    Therefore, every packet $l$ is served the first time $t'-1$ that link $e$ is scheduled after $\delta^l$ becomes positive, so $\delta^l(t'-1) < \overline{k}_e^{\pi}$ when it is served. This is exactly the delay quota of link $e$, so when packet $l$ arrives at the next link at time $t'$, $\delta^l(t') \leq 0$. By definition, packets arrive at the source link with $\delta^l = 0$, which completes the induction step. Now consider a ficticious exit link at the destination of each flow. Each packet must arrive at this link with a delay deficit at most $0$, which means its age is at most $\sum_{e \in T^{(i)}} \sigma_{i,e}^{\pi} = \sum_{e \in T^{(i)}} \overline{k}_e^{\pi}$. This completes the proof.
\end{proof}

We note several things about this result. First, it subsumes the worst-case delay bound under primary interference in Theorem~\ref{th:deadlinebound}. In the proof of that theorem, the worst-case scenario is described to have a maximum inter-scheduling time $\overline{k}_e^{\pi} = K^{\pi}(1-\bar{\mu}_e^{\pi})$ for all links $e$, which makes the bounds identical up to a difference of $1$ slot. Second, when links are scheduled more regularly and $\overline{k}_e$ is independent of the schedule length, this bound can be significantly tighter than that in Theorem~\ref{th:deadlinebound}. In fact, it is a constant factor of $(\sum_{e \in T^{(i)}} \overline{k}_e^{\pi}) / |T^{(i)}|$ from the deadline-optimal lower bound on $\tau_i^*$ for a solitary flow in Theorem~\ref{th:orrbound}.

A final thing to note is that Theorem~\ref{th:maindelaydefbound} requires slice widths that are not resource-minimizing in general. Define the additional capacity required beyond the resource-minimizing value as
\begin{equation}\label{eq:deltaw}
    \Delta w_{i,e}(\pi,\lambda_i) \triangleq \lambda_i \big( \overline{k}_e^{\pi} - \frac{1}{\bar{\mu}_e^{\pi}} \big) \geq 0, \ \forall \ f_i \in \mathcal{F}, \ e \in T^{(i)}.
\end{equation}
Because the average inter-scheduling time is $1/\bar{\mu}_e^{\pi}$, the quantity $\Delta w_{i,e}(\pi,\lambda_i)$ is small when inter-scheduling times are somewhat regular and $\overline{k}_e^{\pi}$ is not much larger than this average.

It is important to clarify that, with these conditions on slice widths, a link is not guaranteed to empty its queue each time it is scheduled. Any given packet can spend more than $\overline{k}_e^{\pi}$ slots at link $e$, but Theorem~\ref{th:maindelaydefbound} guarantees it will make up for this by spending fewer than $\overline{k}_{e'}^{\pi}$ slots at some other link $e'$ on its route. The strength of Theorem~\ref{th:maindelaydefbound} lies in this fact, which enables us to keep $\Delta w_{i,e}(\pi,\lambda_i)$ small and to allocate slices efficiently.

From~\eqref{eq:tauupperbound} and~\eqref{eq:deltaw}, we see that minimizing maximum inter-scheduling times leads to both tighter delay bounds and more efficient slices. If inter-scheduling times are exactly equal, and $\underline{k}_e^{\pi} = \overline{k}_e^{\pi} = \frac{1}{\bar{\mu}_e^{\pi}}$ for all $e \in E$, then $\overline{k}_e^{\pi}$ is by definition minimized for a fixed $\bar{\mu}_e^{\pi}$. If this holds, we say that $\pi$ is a \textit{regular schedule}.

\begin{corollary}\label{cor:regularschedules}
    A policy $\pi \in \Pi_c$ supports a set of flows $\mathcal{F}$ with slice widths $w_{i,e} = \lambda_i \overline{k}_e^{\pi}$ for all $f_i \in \mathcal{F}$ and $e \in T^{(i)}$, if 
    \begin{align}\label{eq:rateconditions}
    \begin{aligned}
        &\sum_{e \in T^{(i)}} \overline{k}_e^{\pi} \leq \tau_i, \ \forall \ f_i \in \mathcal{F}, \\
        &\sum_{f_i : e \in T^{(i)}} \lambda_i \overline{k}_e^{\pi} \leq c_e, \ \forall \ e \in E. \\
    \end{aligned}
    \end{align}
    Furthermore, if $\pi$ is a regular schedule, then slices are resource-minimizing.
\end{corollary}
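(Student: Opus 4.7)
The plan is to deduce Corollary~\ref{cor:regularschedules} almost directly from Theorem~\ref{th:maindelaydefbound} together with the definitions of ``supports $\mathcal{F}$'' and ``resource-minimizing.'' I would split the argument into two short parts, one for each claim.

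For the first claim (that $\pi$ supports $\mathcal{F}$), I would start by observing that the hypothesis $w_{i,e} = \lambda_i \overline{k}_e^{\pi}$ for every $f_i \in \mathcal{F}$ and every $e \in T^{(i)}$ is exactly the condition invoked in Theorem~\ref{th:maindelaydefbound}. That theorem is already quantified over $f_i \in \mathcal{F}$, so it applies simultaneously to every flow under the shared policy $\pi$ and gives $\tau_i^*(\pi,\lambda_i) \leq \sum_{e \in T^{(i)}} \overline{k}_e^{\pi}$. The first inequality in~\eqref{eq:rateconditions} then immediately yields $\tau_i^*(\pi,\lambda_i) \leq \tau_i$, so no packet of $f_i$ expires, which is the definition of $\pi$ supporting $\mathcal{F}$. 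Next I would verify admissibility of the slice widths: substituting $w_{i,e} = \lambda_i \overline{k}_e^{\pi}$ into the per-link capacity constraint $\sum_{f_i:e \in T^{(i)}} w_{i,e} \leq c_e$ gives precisely the second inequality in~\eqref{eq:rateconditions}, so the stated hypotheses guarantee feasibility.

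For the second claim, I would simply unfold the definition of a regular schedule given just above the corollary, namely $\underline{k}_e^{\pi} = \overline{k}_e^{\pi} = 1/\bar{\mu}_e^{\pi}$ for every $e \in E$. Substituting into $w_{i,e} = \lambda_i \overline{k}_e^{\pi}$ yields $w_{i,e} = \lambda_i / \bar{\mu}_e^{\pi}$, which is exactly the definition of resource-minimizing slices introduced after Theorem~\ref{th:deadlinebound}. Hence regularity forces equality in $\Delta w_{i,e}(\pi) \geq 0$ from~\eqref{eq:deltaw}, and no capacity is wasted.

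I do not expect any real obstacle here; the corollary is essentially a restatement of Theorem~\ref{th:maindelaydefbound} packaged with the capacity constraint and the definition of regularity, so that it can be invoked cleanly when designing the polynomial-time algorithm in Section~\ref{sec:linkscheduling}. The only mild subtlety worth flagging in the write-up is that Theorem~\ref{th:maindelaydefbound} is proven per flow but uses only a common schedule, so applying it across all $f_i \in \mathcal{F}$ under the same $\pi$ requires no additional hypothesis beyond what is already assumed.
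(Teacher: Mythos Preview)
Your proposal is correct and follows essentially the same approach as the paper's own proof: invoke Theorem~\ref{th:maindelaydefbound} together with the capacity constraint for the first claim, and unfold the definition of a regular schedule (equivalently, $\Delta w_{i,e}(\pi)=0$) for the second. The paper's argument is terser but logically identical.
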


\begin{proof}
    The sufficient conditions come from the deadline bound in Theorem~\ref{th:maindelaydefbound} and link capacity constraints. If both of these are satisfied, then $\pi$ is guaranteed to support $\mathcal{F}$ by Theorem~\ref{th:maindelaydefbound}. To show the second part, note that by definition,~\eqref{eq:tauupperbound} holds with $\Delta w_{i,e}(\pi,\lambda_i) = 0$ if and only if $\overline{k}_e^{\pi} = \frac{1}{\bar{\mu}_e^{\pi}}$ for all $e \in E$, which is true only when $\pi$ is regular.
\end{proof}

While the bound in Theorem~\ref{th:maindelaydefbound} is not tight in general, and smaller deadline guarantees may be possible, it highlights the importance of keeping inter-scheduling times small. This intuitively makes sense given our knowledge of the deadline-optimal ORR policy. The number of interfering links and the multi-directional flows in this general setting make it impossible to schedule links in order along a flow's route, as in the ORR policy. Rather, by bounding the inter-scheduling times, we can minimize how far the schedule deviates from this order, and do this simultaneously for all flows. This keeps end-to-end delays from growing too large, and allows us to tune the inter-scheduling times on some flows' routes to become ``closer'' to the ORR policy when subject to tighter deadlines. In fact, setting $\overline{k}_e = 2$ for all links on a flow's route recovers the ORR policy exactly under primary interference.

Motivated by this, we develop efficient algorithms in the next section to construct schedules with bounded inter-scheduling times, focusing particularly on regular schedules. We will see that by using the sufficient conditions in Corollary~\ref{cor:regularschedules}, we are able to develop policies which support $\mathcal{F}$ in polynomial time. Because the conditions in Corollary~\ref{cor:regularschedules} are not necessary, this does not violate Theorem~\ref{th:generalcomplexity}, but rather accomplishes our goal of narrowing the policy search to a smaller subclass of $\Pi_c$, trading some performance for a large reduction in complexity.

\section{Algorithm Development}\label{sec:linkscheduling}

In this section, we continue to focus on primary interference, for which activation sets are matchings on the graph $G$. We have seen that regular schedules allow for resource-minimizing slices, which maximize network capacity, so we seek to find regular schedules whenever possible. Unfortunately such schedules often do not exist for a given set of activation rates (we will show this later), so we allow for \textit{almost-regular} schedules, where inter-scheduling times can differ by at most one slot. For ease of notation, when speaking of regular or almost-regular schedules, we denote the maximum inter-scheduling time $\overline{k}_e$ as simply $k_e$. The remainder of the section details the steps our algorithm takes to construct these schedules.

\subsection{Initialization}

Our algorithm takes as input a set of flows $\mathcal{F}$, and throughput and deadline constraints $\lambda_i$ and $\tau_i$ respectively for each $f_i \in \mathcal{F}$. Assume we are able to design an almost-regular schedule, where link $e$ has $\alpha$ inter-scheduling times of $k_e$ and $\beta$ inter-scheduling times of $k_e - 1$. Then for any schedule length $K$, $\alpha k_e + \beta (k_e - 1) = K$. Recall also that $\bar{\mu}_e = \eta_e/K = (\alpha + \beta)/K$, so by rearranging terms and substituting,
\begin{equation}\label{eq:kceiling}
    k_e = \frac{K}{\alpha+\beta} + \frac{\beta}{\alpha+\beta} = \frac{1}{\bar{\mu}_e} + \frac{\beta}{\alpha+\beta} = \Big\lceil \frac{1}{\bar{\mu}_e} \Big\rceil < \frac{1}{\bar{\mu}_e}+1,
\end{equation}
for any $\alpha$ and $\beta$, because $k_e$ is integer-valued. Note that the last equality holds because $\beta/(\alpha+\beta) < 1$ and $1/\bar{\mu}_e$ is integer if and only if the schedule is regular and $\beta = 0$.

From Corollary~\ref{cor:regularschedules} and the bound in~\eqref{eq:kceiling}, any set of activation rates which solve the following program meet service guarantees, provided the algorithm can construct an almost-regular schedule satisfying these rates.
\begin{align}\label{eq:initiallinkactivations}
\begin{aligned}
    \min \ &\sum_{e \in E} \bar{\mu}_e \\
    \text{s.t.} \ &\sum_{e \in T^{(i)}} \Big(\frac{1}{\bar{\mu}_e} + 1 \Big) \leq \tau_i, \ \forall \ f_i \in \mathcal{F}, \\
    &\sum_{f_i : e \in T^{(i)}} \lambda_i \Big(\frac{1}{\bar{\mu}_e} + 1 \Big) \leq c_e, \ \forall \ e \in E.
\end{aligned}
\end{align}
This problem is convex, and we denote the solution as $\bar{\mu}^0$, also referred to as the initial link rates. The algorithm must now find an almost-regular schedule where $\bar{\mu}_e \geq \bar{\mu}_e^0$ for all links $e$.

\subsection{Unique-Edge Matchings}

One hurdle in finding schedules with regularity constraints under primary interference is that, in general, work-conserving policies assign links to more than one matching~\cite{hajek_link_1988}. As a result, constructing a regular schedule of matchings does not guarantee the schedule is regular for each link, which complicates the problem.

Finding an almost-regular schedule that satisfies the constraints in~\eqref{eq:initiallinkactivations}, with fixed values of $k_e$ and a fixed schedule length, can be formulated as a Periodic Event Scheduling Problem (PESP), which is well-studied and known to be NP-complete~\cite{serafini1989mathematical,dauscha_cyclic_1985}
~\footnote{In the conference version of this paper, we incorrectly claim that our problem is NP-complete by equivalence to PESP. We clarify here that solving PESP is the best known method of solving our problem when these parameters are fixed, but they are not equivalent problems. In particular, not all versions of PESP can be formulated as an almost-regular scheduling problem.}. More generally, and without fixing the schedule length, our problem can be solved by finding a cycle on a graph $\tilde{G} = (\tilde{V}, \tilde{E})$. Each $v \in \tilde{V}$ represents a state $(\hat{t}_1,\hat{t}_2,\dots,\hat{t}_{|E|})$, such that $\hat{t}_e$ is the time since link $e$ was most recently scheduled, and edges in $\tilde{E}$ represent valid matchings, which also satisfy almost-regularity constraints. In other words, a matching which includes link $e$ is only valid in a state $v$ where $\hat{t}_e \in \{ k_e-1, k_e \}$ for fixed values of $k_e$. Any cycle in this graph represents a valid almost-regular schedule.



Because $\bar{\mu}_e \geq \bar{\mu}_e^0$, we have from~\eqref{eq:kceiling} that $k_e - 1 \leq \frac{1}{\bar{\mu}_e^0}$ for all $e$, so the number of combinations of $k_e$ that satisfy~\eqref{eq:initiallinkactivations} is bounded and exponential in $|E|$. In the worst case, we must consider this many different graphs to find a cycle. Likewise, because $\hat{t}_e \leq k_e$ for any valid state, the number of states $|\tilde{V}|$ in any $\tilde{G}$ is bounded and exponential in $|E|$. From~\cite[Sec.~22]{cormen2022introduction}, the most efficient known algorithm to detect a cycle in a graph is DFS, which has complexity $O(|\tilde{V}|+|\tilde{E}|)$. There is at least one valid action in each state, so the overall complexity is at least $O(\beta^{2|E|})$ for some constant $\beta$.

It may seem that we are back to square one, given that solving this problem requires exponential time, just like the problem we started with in Theorem~\ref{th:generalcomplexity}. By further constraining links to belong to a single matching, however, we can make the problem tractable. Under this additional constraint, if matchings are scheduled regularly, then each link is also scheduled regularly. We refer to such matchings as \textit{unique-edge matchings}, and will see that we can find almost-regular schedules of such matchings in polynomial time.


Clearly each link in a unique-edge matching is activated at the same rate as its matching. In order to ensure the constraints in~\eqref{eq:initiallinkactivations} are met, our algorithm must guarantee the initial matching rate $\bar{\mu}_m^0 \geq \bar{\mu}_e^0$ for all links $e$ belonging to matching $m$, and for all matchings $m$. Then the unique-edge matching problem can be written as
\begin{align}\label{eq:edgematchings}
\begin{aligned}
    \min_{M \subseteq M^1} &\sum_{m \in M} \bar{\mu}_m^0 \\
    \text{s.t.} \ &\bar{\mu}_m^0 \geq \bar{\mu}_e^0, \ \forall \ e \in m, m \in M\\
        &e \in M, \ \forall \ e \in E, \\
        &m \cap m' = \varnothing, \ \forall \ m, m' \in M,
\end{aligned}
\end{align}
where $M^1$ is the set of all feasible matchings and $\varnothing$ is the empty set. Because only one matching can be activated at a time, $M$ is feasible if and only if $\sum_{m \in M} \bar{\mu}_m^0 \leq 1$, which motivates the objective function.

Despite being of practical interest, and related to the minimum edge sum coloring problem~\cite{bar-noy_chromatic_1998} and the weighted sum coloring problem~\cite{bar2000sum, epstein_weighted_2009}, this exact formulation has received no attention in the literature. We define a \textit{Greedy Matching} (GM) algorithm, which has complexity $O(|E|^2)$ and provides an approximate solution to this problem, in Algorithm~\ref{alg:greedymatching}. 
\begin{algorithm}
    \DontPrintSemicolon
    %
    \SetKwInput{Input}{Input}\SetKwInOut{Output}{Output}
    \Input{Initial link activation rates $\bar{\mu}_e^0, \ \forall \ e \in E$}
    \Output{Set of unique edge matchings $M^{GM}$}
    Sort links in $E$ by $\bar{\mu}_e^0$, largest to smallest \\
    Define matching set $M^{GM}$ \\
    \While{$E$ is not empty}{
        Add new matching $m$ to $M^{GM}$ \\
        Add first link $e' \in E$ to $m$ and remove $e'$ from $E$ \\
        Set $\bar{\mu}_m^0 = \bar{\mu}_{e'}^0$ \\
        \For{$e \in E$}{
            \If{$m \cup e$ is a valid matching}{
                Add $e$ to $m$ and remove $e$ from $E$ \\
            }
        }
    }

    Return $M^{GM}, \ \bar{\mu}_m^0, \ \forall \ m \in M^{GM}$ \\   
    
    \caption{Greedy Matching (GM)}
    \label{alg:greedymatching}
\end{algorithm}

The algorithm sorts links in decreasing order of initial rates, and greedily adds each link to a matching if it does not interfere with any links previously added. Once all links have been considered, the matching is full and is assigned an initial rate equal to that of its first link, which by construction is the largest. This greedy heuristic ensures the constraints in~\eqref{eq:edgematchings} are met and the matchings it returns are unique-edge matchings, while keeping the objective within a small factor of optimal, as shown next.
\begin{theorem}\label{th:gdmapproximation}
    Let $M^*$ be a solution (i.e., an optimal set of matchings) to~\eqref{eq:edgematchings}. The GM algorithm produces a set of matchings $M^{GM}$ such that 
    \begin{equation}
        \sum_{m \in M^{GM}} \bar{\mu}_m^0 \leq \Big( \frac{|M_{GM}|}{\Delta(G)} \Big) \sum_{m \in M^*} \bar{\mu}_m^0,
    \end{equation}
    where $\Delta(G)$ is the maximum degree of $G$. This is a factor of $2$ from optimal in the worst case, provided $\Delta(G) = O(\log |V|)$.
\end{theorem}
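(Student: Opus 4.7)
The plan is to separate the proof into a structural bound on $|M^{GM}|$ and a comparison of the matching-rate sums. The key structural observation is that the starting edge of the $k$-th GM matching cannot have fit into any of the $k-1$ preceding matchings, which severely constrains $k$ in terms of $\Delta(G)$.

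First I would bound $|M^{GM}| \leq 2\Delta(G) - 1$. Let $\tilde{e}_k = (u_k, v_k)$ denote the first edge added when matching $m_k$ is created. Since GM processes edges in nonincreasing rate order and only starts a new matching once no remaining edge fits into the current one, $\tilde{e}_k$ must conflict with at least one edge in each prior matching $m_1, \ldots, m_{k-1}$. These conflicting edges are pairwise distinct (the matchings partition $E$), and each is incident to $u_k$ or $v_k$. Because the total number of edges incident to $\{u_k, v_k\}$ other than $\tilde{e}_k$ itself is $\deg(u_k) + \deg(v_k) - 2 \leq 2\Delta(G) - 2$, we obtain $k - 1 \leq 2\Delta(G) - 2$, so $|M^{GM}| \leq 2\Delta(G) - 1$. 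This immediately gives $|M^{GM}|/\Delta(G) \leq 2 - 1/\Delta(G) < 2$, yielding the $2$-optimality claim.

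Next I would bound the ratio of sums. Since GM sorts edges in decreasing rate order, the starting edge of each matching has the largest rate within that matching, so $\bar{\mu}_{m_k}^0 = \bar{\mu}_{\tilde{e}_k}^0$ and in particular $\bar{\mu}_{m_k}^0 \leq \bar{\mu}_{e_1}^0$ for every $k$. Thus
\[
\sum_{m \in M^{GM}} \bar{\mu}_m^0 \;\leq\; |M^{GM}| \cdot \bar{\mu}_{e_1}^0.
\]
For the lower bound on $\sum_{m \in M^*} \bar{\mu}_m^0$, I would use that $|M^*| \geq \chi'(G) \geq \Delta(G)$ (the chromatic index dominates the max degree), together with the fact that the $\Delta(G)$ edges incident to any max-degree vertex $v^*$ must lie in $\Delta(G)$ distinct matchings of $M^*$. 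Each such matching has rate at least that of its edge at $v^*$, so $\sum_{m \in M^*} \bar{\mu}_m^0 \geq \sum_{e \sim v^*} \bar{\mu}_e^0$, and by choosing $v^*$ adjacent to the maximum-rate edge $e_1$ one can push this lower bound up to the order of $\Delta(G) \cdot \bar{\mu}_{e_1}^0$. Combining the two bounds yields $\sum_{m \in M^{GM}} \bar{\mu}_m^0 \leq (|M^{GM}|/\Delta(G)) \sum_{m \in M^*} \bar{\mu}_m^0$.

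The main obstacle is the second step: a naive application of $|M^*| \geq \Delta(G)$ combined with $\sum_{m \in M^*} \bar{\mu}_m^0 \geq \bar{\mu}_{e_1}^0$ only yields a factor of $|M^{GM}|$ in the ratio, not $|M^{GM}|/\Delta(G)$. Tightening the OPT lower bound by the extra factor of $\Delta(G)$ requires exploiting the $\Delta(G)$ disjoint matchings forced by a maximum-degree vertex, and averaging carefully over their rates. An alternative route I would consider is to show directly that the sorted GM matching rates are dominated by the sorted OPT rates (so $\bar{\mu}_{m_k}^0 \leq \bar{\mu}_{m^*_k}^0$ for $k \leq \min(|M^{GM}|, |M^*|)$) and then exploit $|M^*| \geq \Delta(G)$ in a telescoping argument; proving that dominance relation cleanly is where the technical work lives.
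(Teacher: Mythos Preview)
Your first step---the direct bound $|M^{GM}|\le 2\Delta(G)-1$ via the starting edge $\tilde e_k$ conflicting with all $k-1$ earlier matchings---is clean and correct; it is in fact more elementary than the paper's treatment, which obtains the same bound by citing a greedy edge-coloring result and attaches the side condition $\Delta(G)=O(\log|V|)$.

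The gap is in your second step. The lower bound $\sum_{m\in M^*}\bar\mu_m^0\ge\sum_{e\sim v^*}\bar\mu_e^0$ is valid, but the assertion that ``by choosing $v^*$ adjacent to $e_1$ one can push this lower bound up to the order of $\Delta(G)\cdot\bar\mu_{e_1}^0$'' is unjustified: the other $\Delta(G)-1$ edges at $v^*$ can have arbitrarily small rates, so this sum need not be comparable to $\Delta(G)\cdot\bar\mu_{e_1}^0$. Combined with your (very loose) upper bound $\sum_{m\in M^{GM}}\bar\mu_m^0\le|M^{GM}|\cdot\bar\mu_{e_1}^0$, you recover only a ratio of $|M^{GM}|$, not $|M^{GM}|/\Delta(G)$---precisely the obstacle you flag yourself. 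Your ``alternative route'' is the paper's actual argument, but you do not carry it out. The paper proves the pointwise dominance $\bar\mu_{(i)}^*\ge\bar\mu_{(i)}^{GM}$ for each $i\le|M^*|$ (sorted decreasingly), observing that both sides equal the maximum rate among edges not yet covered by the first $i-1$ matchings, and that GM's greediness drives this maximum down at least as fast as any other partition. Given dominance, a histogram-rescaling argument (equivalently: the top $|M^*|$ terms of the sorted GM sequence carry at least the proportional share $|M^*|/|M^{GM}|$ of the total) yields $\sum_{m\in M^*}\bar\mu_m^0\ge(|M^*|/|M^{GM}|)\sum_{m\in M^{GM}}\bar\mu_m^0$, and $|M^*|\ge\chi'(G)\ge\Delta(G)$ finishes. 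So the ``technical work'' you defer is exactly the substance of the proof.
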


\begin{proof}
    See Appendix~\ref{app:gmapproxproof}.
\end{proof}

\subsection{Feasibility Conditions}

Next we analyze the feasibility conditions for both regular and almost-regular schedules of unique-edge matchings. Define a \textit{step-down} vector as a sorted vector of values, where each element is an integer multiple of the next. When applied to activation rates, this property turns out to be quite useful in constructing regular schedules.
\begin{lemma}\label{lemma:regconditions}
    A regular schedule of unique-edge matchings exists if
    \begin{enumerate}
        \item $\sum_{m \in M^{GM}} \bar{\mu}_m = 1$,
        \item the vector of rates $\bar{\mu}$ is step-down,
        \item and $k_m = \lceil \frac{1}{\bar{\mu}_m} \rceil = \frac{1}{\bar{\mu}_m}$ for all $m$.
    \end{enumerate}    
    Furthermore, an almost-regular schedule exists if only conditions $(1)$ and $(2)$ hold.
\end{lemma}

\begin{proof}
    See Appendix~\ref{app:regconditionsproof}.
\end{proof}

Initial rates can always be increased without violating the conditions in~\eqref{eq:rateconditions}, so if a set of rates is feasible, then the rates can be normalized to sum to $1$ in order to meet condition $(1)$. The other conditions are more difficult to meet, particularly at the same time, and a regular schedule is not guaranteed to exist for a feasible set of rates.

In~\cite{li_scheduling_2021}, Li et al. construct a polynomial-time algorithm which takes a set of rates and outputs a set of possibly larger \textit{augmented rates} $\hat{\mu}$ that are step-down. We refer to this as the Augmented Rates algorithm. When applied to the initial activation rates of unique-edge matchings, the output satisfies condition $(2)$ from Lemma~\ref{lemma:regconditions}. We do not replicate the algorithm here due to space constraints, but encourage readers to reference Algorithm 2 and the accompanying results in~\cite{li_scheduling_2021}.

Despite satisfying condition $(2)$, these augmented rates do not always satisfy condition $(3)$, and there is no guarantee that we can form a regular schedule. However, when $\sum_m \hat{\mu}_m \leq 1$, the rates can be normalized to meet condition $(1)$ as described above, without affecting the step-down property of the vector. This ensures that conditions $(1)$ and $(2)$ are met and we can form an almost-regular schedule. Extrapolating a feasibility result from~\cite{li_scheduling_2021} to our setting yields the following sufficient condition for this to hold.
\begin{lemma}\label{lemma:almostregcond}
    An almost-regular schedule of unique-edge matchings exists if the sum of initial rates $\sum_m \bar{\mu}_m^0 \leq \ln 2 \approx 0.69$.
\end{lemma}

\begin{proof}
    The proof follows from~\cite{li_scheduling_2021}.
\end{proof}

\subsection{Schedule Construction}

Lemma~\ref{lemma:almostregcond} shows that when initial rates are small enough, the Augmented Rates algorithm produces a vector of rates that are step-down and can be normalized to sum to $1$. We can construct an almost-regular schedule from these rates by first finding a longer, regular schedule where not all slots are filled, and then removing the unused slots. The algorithm for constructing schedules in this manner is likewise first described by Li et al. in~\cite{li_scheduling_2021}. We replicate and briefly describe the algorithm here. For a full proof of correctness, which was omitted in the original work, see the proof by construction of Lemma~\ref{lemma:regconditions} in Appendix~\ref{app:regconditionsproof}.
\begin{algorithm}
    \DontPrintSemicolon
    %
    \SetKwInput{Input}{Input}\SetKwInOut{Output}{Output}
    \Input{Flows $\mathcal{F}$, throughput/deadline constraints $(\lambda_i, \tau_i)$}
    \Output{Almost regular schedule $\pi \in \Pi_c$}
    Find initial link activations $\bar{\mu}^0_e, \ \forall \ e$, by solving~\eqref{eq:initiallinkactivations} \\
    Find unique-edge matchings $M^{GM}$ and initial rates $\bar{\mu}^0_m, \ \forall \ m$, using Greedy Matching \\
    Find $\hat{\mu}$ using the Augmented Rates algorithm from~\cite{li_scheduling_2021} \\
    \If{$\sum_m \hat{\mu}_m > 1$}{
        Return None \\
    }
    Normalize rates $\bar{\mu}_m^{\pi} = \hat{\mu}_m / \sum_m \hat{\mu}_m$ for all $m$ \\
    Set $M = |M^{GM}|$, $K^{\pi} = 1 / \bar{\mu}_M^{\pi}$ and $\eta_m = \bar{\mu}_m^{\pi} K^{\pi}$ for all $m$ \\
    Form an empty schedule with length $K' = \lceil \frac{1}{\bar{\mu}_1^{\pi}} \rceil \eta_1$. \\
    Assign slot $1$ to $m_1$, along with every $K'/\eta_1$ subsequent slots \\
    \For{$i=2,3,\dots,|M|$}{
        Set $S_0$ as the set of empty slots \\
        \For{$j=1,2,\dots,i-1$}{
            Set $S_j \subseteq S_{j-1}$ as the set of slots most closely following a slot assigned to $m_j$ \\ 
        }
        Assign the first slot in $S_{i-1}$ to $m_i$, along with every $K'/\eta_i$ subsequent slots \\
    }
    Remove the $K'-K^{\pi}$ unassigned slots from the schedule \\
    Set $k_e^{\pi}$ as the max inter-scheduling time of link $e$ in $\pi$ \\
    Set $w_{i,e} = \lambda_i k_e^{\pi}$ for all $f_i \in \mathcal{F}$ and $e \in T^{(i)}$ \\
    
    Return $\pi, \ w$
    \caption{Almost-Regular Schedule Construction}
    \label{alg:schedconstr}
\end{algorithm}

Let $\bar{\mu}$ be the augmented vector of rates after normalizing, sorted from largest to smallest, $M$ be the number of matchings, and $K = k_M = 1/\bar{\mu}_M$ be the schedule length. From the step-down property, this is guaranteed to be an integer. To see this, let $z_{ij} = \frac{\bar{\mu}_i}{\bar{\mu}_j}$ for all pairs of matchings $i$ and $j$, and note that $z_{ij}$ is integer-valued by the step-down property when $i \leq j$. Then $\bar{\mu}_m = z_{m,M} \bar{\mu}_M$ for all $m$. Substituting into condition $(1)$ of Lemma~\ref{lemma:regconditions} and rearranging yields
\begin{equation}
    K = \frac{1}{\bar{\mu}_M} = \sum_m z_{mM},
\end{equation}
which by definition is an integer.

Define the number of occurrences of matching $m$ in the schedule as $\eta_m = \bar{\mu}_m K$, and define an auxiliary schedule of length $K' = k_1 \eta_1 = \lceil \frac{1}{\bar{\mu}_1} \rceil \eta_1$. Note that the schedule length is an integer multiple of $k_1$, which allows matching $1$ to be scheduled regularly every $k_1$ slots when it does not overlap or ``collide'' with any other matching in the schedule. Now consider an arbitrary matching $m$. It too can be scheduled regularly every $k_m' = \frac{K'}{\eta_m}$ slots (assuming no collisions) if $k_m'$ is an integer. By definition,
\begin{equation}
    k_m' = \frac{K'}{\bar{\mu}_m K} = \frac{z_{1m} K'}{\bar{\mu}_1 K} = \frac{z_{1m} K'}{\eta_1} = z_{1m} k_1,
\end{equation}
which is integer-valued. Therefore, every matching can be scheduled regularly in a schedule of length $K'$, provided there are no collisions in the schedule. Collisions can be avoided by scheduling matchings in a greedy fashion, beginning with matching $1$ and proceeding in increasing order of $k_m'$. Because the vector is already sorted, this is simply the ordering of the vector. For each matching $m$, pick an empty slot in the schedule and assign it to $m$, along with every $k_m'$ subsequent slots. Due to the step-down property of the vector and the greedy ordering, it is easy to show that no collisons can occur following this method.

The resulting schedule is regular for all matchings, and has length $K'$. If $\frac{1}{\bar{\mu}_1}$ is integer-valued, then $K' = K$, and the method described above generates a regular schedule that supports $\mathcal{F}$. If not, there will be $K'-K$ empty slots in the schedule after all matchings have been added. These slots are then removed, yielding a schedule of length $K$ with activation rates $\bar{\mu}_m$ for each matching $m$. To ensure this schedule is almost-regular, the first slot assigned to matching $m$ should be picked to spread out the remaining empty slots.

The full details of the schedule construction are described in Algorithm~\ref{alg:schedconstr}, which we refer to as the Almost-Regular Schedule Construction (ARSC) algorithm. This algorithm has complexity $O(|E|^2 + M^2 K^2 + X)$, where $X$ is the complexity of the algorithm used to solve the convex program~\eqref{eq:initiallinkactivations}.
\begin{theorem}
    If the augmented rates found in step $3$ of the ARSC algorithm satisfy $\sum_m \hat{\mu}_m \leq 1$, then the algorithm returns an almost-regular schedule $\pi$ which supports $\mathcal{F}$, and a corresponding set of slices, such that each slice $w_{i,e}$ is within a factor of $\frac{1}{k_e^{\pi}} \leq \bar{\mu}_e^{\pi}$ of being resource-minimizing. 
\end{theorem}

\begin{proof}
    To show that $\pi$ supports $\mathcal{F}$, one needs only to show that it satisfies the sufficient conditions in Corollary~\ref{cor:regularschedules}. From the constraints in the Greedy Matching and Augmented Rates algorithms, $\hat{\mu}_m \geq \bar{\mu}_m^0$ for all $m$, so this is guaranteed to hold.

    To show the second part, note that the algorithm sets slice widths equal to $w_{i,e} = \lambda_i k_e^{\pi}$, and because the schedule is almost-regular and links are scheduled every $k_e^{\pi}$ or $k_e^{\pi}-1$ slots, $\frac{1}{\bar{\mu}_e^{\pi}} \geq k_e^{\pi}-1$ Then, from~\eqref{eq:deltaw}, the excess slice width is 
    \begin{equation}
        \Delta w_{i,e}(\pi,\lambda_i) \leq \lambda_i (k_e^{\pi} - (k_e^{\pi}-1)) = \lambda_i,
    \end{equation}
    and the fraction of excess slice width is 
    \begin{equation}
        \frac{\Delta w_{i,e}(\pi,\lambda_i)}{w_{i,e}} \leq \frac{\lambda_i}{\lambda_i k_e^{\pi}} = \frac{1}{k_e^{\pi}} \leq \bar{\mu}_e^{\pi}, \ \forall \ e \in E.
    \end{equation}
\end{proof}

The final part of this theorem is critical, because it shows that our algorithm does not waste network resources to meet service guarantees. A naive method of meeting guarantees is to overprovison slices, but this result shows that our algorithm does not rely on this, and that a network can operate at near capacity while implementing ARSC.

We further illustrate the schedule construction and details of the ARSC algorithm through an example. Let the normalized augmented rates be the step-down vector $\bar{\mu} = \big(\frac{2}{5}, \frac{1}{5}, \frac{1}{5}, \frac{1}{10}, \frac{1}{10})$, the schedule length $K = 10$, and $\eta_1 = 4$. We begin by constructing a longer schedule of length $K' = \eta_1 k_1 = 12$ slots, and assign matching $1$ to the first slot followed by every $k_1 = 3$ subsequent slots. This gives an initial schedule
\begin{equation*}
    \underline{1} \ \underline{\phantom{x}} \ \underline{\phantom{x}} \ \underline{1} \ \underline{\phantom{x}} \ \underline{\phantom{x}} \ \underline{1} \ \underline{\phantom{x}} \ \underline{\phantom{x}} \ \underline{1} \ \underline{\phantom{x}} \ \underline{\phantom{x}}.
\end{equation*}

For each subsequent matching $i$, the algorithm takes the set of empty slots $\chi_1$ that most closely follow matching $1$ in the schedule (here $\chi_1 = \{2,5,8,11\}$). Note that because the schedule is cyclic, we assume it wraps around when computing the closest following slots. If $i > 2$, it then forms a set $\chi_2 \subseteq \chi_1$ with the set of empty slots that most closely follow matching $2$. It proceeds through each matching which has already been scheduled in the same fashion, and finally assigns the first slot in $\chi_{i-1}$ to matching $i$, followed by slots at subsequent spacings of $K/\eta_i$. Following this method, the first slot in $\chi_1$ is slot $2$, so matching $2$ is assigned to this slot, followed by slots at equal spacings of $K/\eta_2 = 6$. This yields 
\begin{equation*}
    \underline{1} \ \underline{2} \ \underline{\phantom{x}} \ \underline{1} \ \underline{\phantom{x}} \ \underline{\phantom{x}} \ \underline{1} \ \underline{2} \ \underline{\phantom{x}} \ \underline{1} \ \underline{\phantom{x}} \ \underline{\phantom{x}}.
\end{equation*}

Moving on to matching $3$, we have $\chi_1 = \{5,11\}$. Both of these slots lag matching $2$ by $3$ slots in the schedule, so $\chi_2 = \chi_1$, and we assign matching $3$ to the first slot $5$, followed by slots at increments of $K/\eta_3 = 6$. This gives us
\begin{equation*}
    \underline{1} \ \underline{2} \ \underline{\phantom{x}} \ \underline{1} \ \underline{3} \ \underline{\phantom{x}} \ \underline{1} \ \underline{2} \ \underline{\phantom{x}} \ \underline{1} \ \underline{3} \ \underline{\phantom{x}}.
\end{equation*}

For matching $4$, $\chi_1 = \{3,6,9,12\}$, and $\chi_2 = \chi_3 = \{3,9\}$. Therefore we assign matching $4$ to slot $3$, and because $\eta_4 = 12$, this is the only slot where it is assigned. Finally, for matching $5$, $\chi_1 = \{6,9,12\}$, and $\chi_2 = \chi_3 = \chi_4 = 9$. Adding matching $5$ and removing the remaining empty slots ultimately yields the schedule
\begin{equation*}
    \underline{1} \ \underline{2} \ \underline{4} \ \underline{1} \ \underline{3} \ \underline{1} \ \underline{2} \ \underline{5} \ \underline{1} \ \underline{3}.
\end{equation*}
One can verify that this final schedule has length $K=10$, the activation rates are equal to $\bar{\mu}$, and the schedule is almost-regular.

\subsection{Discussion}

The ARSC algorithm provides an efficient, polynomial-time method for constructing schedules to meet hard throughput and deadline constraints, under primary interference and general network topologies. It can meet tight deadline guarantees, and it allocates slices within a small factor of being resource-minimizing. Of course, Theorem~\ref{th:generalcomplexity} shows that finding a feasible policy in general suffers from exponential complexity, so ARSC cannot find a schedule for every set of flows where one exists. The algorithm runs in polynomial time by constraining the problem and taking suboptimal steps. 

Forcing the class of policies to be regular or almost-regular is one such restriction, as are the unique-edge matching condition and the requirement that augmented rate vectors be step-down. These conditions are necessary for the ARSC algorithm to produce a feasible schedule, but are not necessary for a feasible schedule to exist. The gap between the set of all feasible schedules and the set returned by ARSC is the price we pay for the reduction in complexity. Lemma~\ref{lemma:almostregcond} shows that when the sum of initial matching rates is less than $\ln 2 \approx 0.69$, ARSC will always succeed, but there is no necessary condition on the algorithm's success that we have found. We explore the algorithm's feasible solution rate through simulations in the next section.

\section{Numerical Results}\label{sec:simulations}

We support our results in this section by simulating the ARSC algorithm and observing its performance numerically. In all our simulations we use the example network in Figure~\ref{fig:network-diagram}, and note that while the diagram shows bidirectional links, we treat each of these as two directional links which interfere as in the analysis. We further assume the network topology and link capacities are fixed. 

In each experiment, we generate $32$ flows with randomly chosen source/destination pairs, and fix their routes using shortest path routing. We assume that each flow has an identical throughput $\lambda$ and deadline $\tau$. For perspective on our algorithm's performance, we compute the largest value $\lambda^*$ that $\lambda$ can take independent of any deadline constraints, by finding a throughput-optimal set of matchings. We then scale the throughput of each flow using this value.
\begin{figure}
    \centering
    \includegraphics[width=0.3\textwidth]{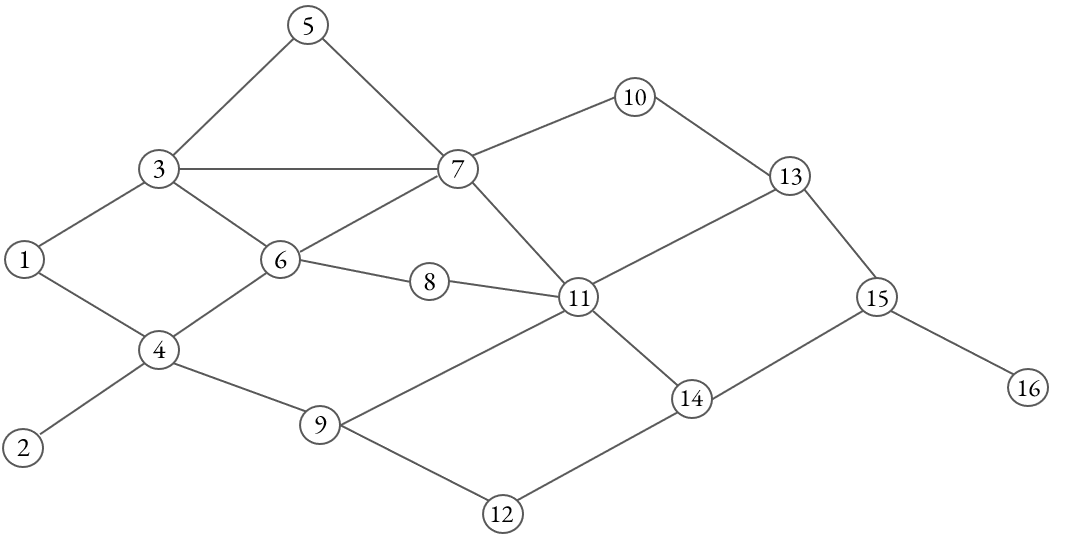}
    \caption{Example Network Diagram}
    \label{fig:network-diagram}
\end{figure}
\begin{figure}
    \begin{subfigure}{.23\textwidth}
        \centering
        \includegraphics[width=.95\linewidth]{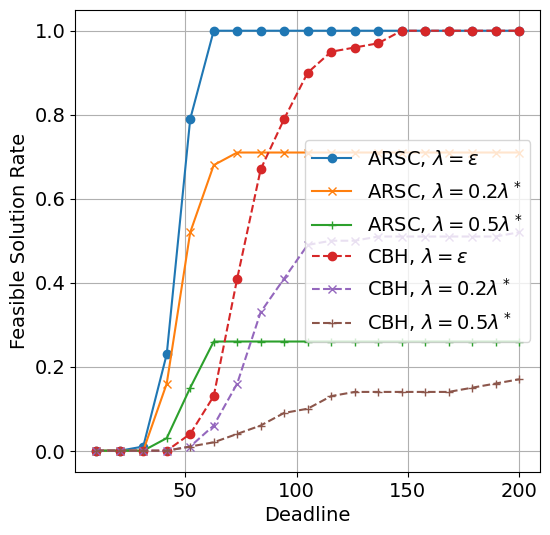}
    \end{subfigure}%
    \begin{subfigure}{.23\textwidth}
        \centering
        \includegraphics[width=.95\linewidth]{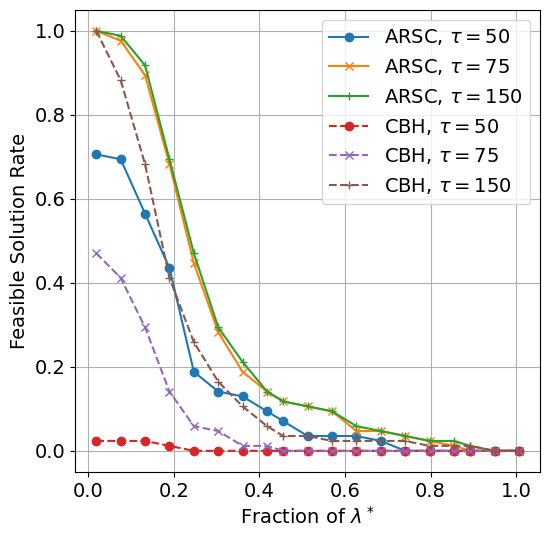}
    \end{subfigure}
    \caption{Feasibility Rate}
    \label{fig:feasibililty-results}
\end{figure}

In the previous section, the ARSC algorithm was shown to meet service guarantees when it returns a feasible almost-regular schedule, so we first examine how often this occurs. In Figure~\ref{fig:feasibililty-results}, we show the feasible solution rate of the ARSC algorithm under varying throughputs $\lambda$ and deadlines $\tau$ by generating $100$ random sets of $32$ flows and averaging the results. We directly compare these results to the Credit-Based Heuristic (CBH) algorithm in~\cite{chilukuri_delay-aware_2015}, the closest to our work and the most recent result in the line of work starting with~\cite{djukic2007quality}, which makes use of contiguous block scheduling and which we have referenced on several occasions.

On the left, we sweep deadlines and run both algorithms under three values of $\lambda$, where $\epsilon$ is the smallest throughput allowed by the implementation, and the other values are scaled by $\lambda^*$. For an arbitrarily small $\lambda = \epsilon$, we expect ARSC to always return a policy for a deadline of $60$ and above, because the longest path any flow can take is $6$ hops, and the network can be colored with $10$ colors. Therefore, a simple round-robin policy will ensure the deadline is met, and because a round-robin schedule is regular by definition, ARSC will find it. The plot confirms that this is indeed the case.

In any direct comparison between ARSC and CBH, ARSC shows significant improvement. This is most notable in the deadline range between $40$ and $70$, during which CBH still has a small solution rate but where ARSC reaches $100\%$ for a throughput of $\epsilon$ and $70\%$ for a throughput of $0.2 \lambda^*$. Below this range, using the same arguments of path length and graph coloring referenced above, the probability that any policy can find a solution for all $32$ flows becomes increasingly low. By the opposite argument, it becomes easier to do so as deadlines increase beyond this range. This region where meeting deadlines is just barely feasible is where our algorithm shines, proving that ARSC can meet tight deadline guarantees. On the right side of Figure~\ref{fig:deadline-results}, we sweep throughput and perform the same comparison for three fixed deadlines, showing more clearly the tradeoff between feasibililty and throughput.
\begin{figure}
    \begin{subfigure}{.23\textwidth}
        \centering
        \includegraphics[width=.95\linewidth]{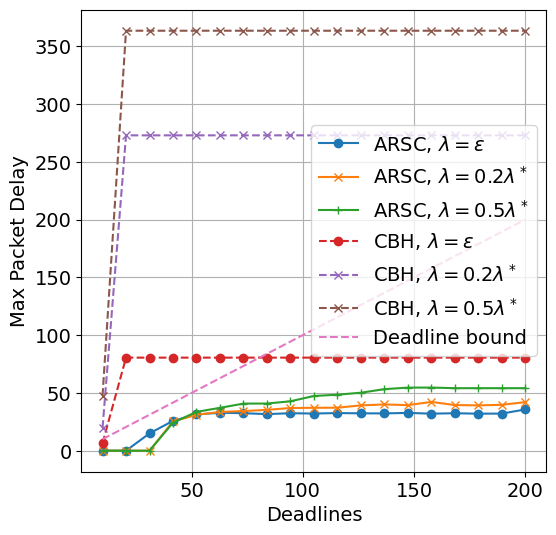}
    \end{subfigure}%
    \begin{subfigure}{.23\textwidth}
        \centering
        \includegraphics[width=.95\linewidth]{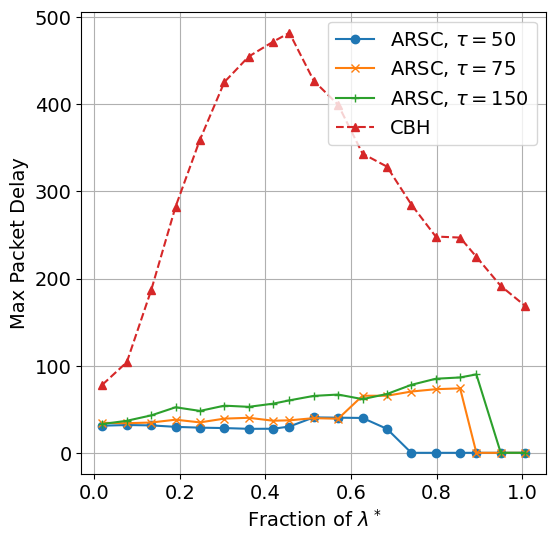}
    \end{subfigure}
    \caption{Max Packet Delay}
    \label{fig:deadline-results}
\end{figure}

In Figure~\ref{fig:deadline-results}, we plot the maximum delay seen by any packet under each simulated policy, averaged over the same $100$ sets of randomly generated flows. On the left, we again sweep deadlines and plot results for both ARSC and CBH under varying throughputs. We also plot the deadline bound for reference. When there is no feasible policy under ARSC, the max delay is plotted as $0$. As expected, no packet sees a delay larger than the deadline bound under ARSC. The difference in delay between throughput levels is relatively small, which makes sense if the limiting factor is interference constraints and schedule order, rather than link capacity.

The CBH algorithm sees much larger packet delays on average, which we expect given that link inter-scheduling times are close to the schedule length. This demonstrates our result that regular schedules lead to much smaller packet delays, and thus tighter achievable deadlines.

On the right, we sweep throughput and plot results for varying deadlines. Because CBH is computed independently of the deadline bound, the policy remains the same and is only plotted once. Once again, ARSC produces consistent results and low delays across throughputs and deadlines, while CBH sees much larger delays. 

\section{Conclusion}\label{sec:conclusion}

In this paper, we analyzed the impact of wireless interference on scheduling for service guarantees. We defined the feasible region of throughput and deadline guarantees for a solitary flow, and showed that without carefully structuring the order of a scheduling policy, packets can experience large delay. To alleviate this problem, we showed that under regular schedules, tight deadline guarantees can be made which are independent of the schedule length. Finally, we developed an algorithm to construct regular schedules in polynomial time, which are guaranteed to meet service requirements for all flows, while leaving sufficient capacity for best-effort traffic to achieve near-optimal throughput. Future work includes optimizing routing as well as scheduling, and dealing with unreliable links, changing wireless topologies, and stochastic traffic.

\appendix

\subsection{Proof of Theorem~\ref{th:cyclicschedules}}\label{app:cyclicschproof}

Recall that arrival rates are fixed on the interval $0 \leq t \leq T$, so exactly $\lambda_i$ packets of $f_i$ arrive in each slot. In order for a policy to support $\mathcal{F}$, no packet can exist in the system for longer than its deadline, so at most $\lambda_i \tau_i$ packets from $f_i$ can be present in the system at the end of a slot. This condition is both necessary and sufficient on this interval because packets are served in a FCFS manner, so $\pi$ supports $\mathcal{F}$ up to time $T$ if and only if $\sum_{e \in T^{(i)}} Q_{i,e}^{\pi}(t) \leq \lambda_i \tau_i$ for all $f_i \in \mathcal{F}$ on this interval. By assumption, at least one policy $\pi'$ exists that supports $\mathcal{F}$, so this condition must hold under $\pi'$.

Define the state of the system at time $t$ as the length of each queue and denote it by $s^{\pi'}(t)$. Because queue lengths are bounded under $\pi'$, there exist a finite number of states which can be visited over any time horizon $T$. In particular, for sufficiently large $T$, there must exist a state $s^{\pi'}(t_0)$ which occurs at time $t_0$ and then occurs again at time $t_0+K^{\pi'}$ for some $K^{\pi'} > 0$. Let $s^{\pi'}(t_0)$ be the first state where this event occurs, and let the set of actions taken in the time interval $[t_0,t_0+K^{\pi'})$ be the policy $\pi$ with $K^{\pi} = K^{\pi'}$ and $\mu^{\pi}(t) = \mu^{\pi'} \big((t-t_0) \bmod K^{\pi} + t_0 \big)$ for all $t \geq 0$. Then for all $t_0 \leq t \leq T$, we have $s^{\pi}(t) = s^{\pi'} \big((t-t_0) \bmod K^{\pi} + t_0 \big)$, so~\eqref{eq:queuebound} holds and $\pi$ supports $\mathcal{F}$ for all $t_0 \leq t \leq T$.

Note that $t_0 > 0$ because queues are empty at $t=0$ and the system requires time to ``ramp up''. However, we next show that $\pi$ supports $\mathcal{F}$ on the interval $0 \leq t < t_0$ using induction on the queue sizes. The queue evolution equations are given by 
\begin{equation}
    Q_{i,e}^{\pi}(t+1) = \min\{ Q_{i,e}^{\pi}(t) + \tilde{\lambda}_{i,e}(t) - \mu_e^{\pi}(t) w_{i,e}, \ 0 \},
\end{equation}

for all $0 \leq t \leq T$, where
\begin{equation}
    \tilde{\lambda}_{i,e}(t) = \begin{cases}
        \lambda_i, &e = T^{(i)}_0, \\
        \mu_{e_{-1}}^{\pi}(t) \cdot \min \{w_{i,e_{-1}}, \ Q_{i,e_{-1}}^{\pi}(t) \}, &e \neq T^{(i)}_0,
    \end{cases}
\end{equation}

and with a slight abuse of notation we denote the link preceding $e$ in $T^{(i)}$ as $e_{-1}$, where it is understood we are referring to $f_i$. Now assume that $Q_{i,e}^{\pi}(t) \leq Q_{i,e}^{\pi}(t+K^{\pi})$ for all $i$ and $e$. Then, from the queue evolution equations and given that $\mu^{\pi}(t) = \mu^{\pi}(t+K^{\pi})$, this implies that $Q_{i,e}^{\pi}(t+1) \leq Q_{i,e}^{\pi}(t+K^{\pi}+1)$ for all $i$ and $e$ as well. By definition, $Q_{i,e}^{\pi}(0) \leq Q_{i,e}^{\pi}(K^{\pi})$ for all $i$ and $e$, which completes the induction step. Therefore,
\begin{equation}
    \sum_{e \in T^{(i)}} Q_{i,e}^{\pi}(t) \leq \sum_{e \in T^{(i)}} Q_{i,e}^{\pi}(t+nK^{\pi}) \leq \lambda_i \tau_i
\end{equation}

for all $f_i$ and $0 \leq t < t_0$, and some $n>0$ such that $t_0 \leq t+nK^{\pi} < t_0 + K^{\pi}$, which shows that $\pi$ supports $\mathcal{F}$ on the interval $0 \leq~t < t_0$.

It can easily be shown that $\pi$ also supports $\mathcal{F}$ on the interval $t > T$. Assume for all $t > T$, dummy packets arrive with the same rate $\lambda_i$ for each flow, until all real packets have been delivered. Then, because $\pi$ supports every flow under regular arrivals, it must also support every flow under the dummy arrivals. This completes the proof.

\subsection{Proof of Theorem~\ref{th:orrbound}}\label{app:orrproof}

The ORR policy schedules links in order from source to destination in subsequent time slots according to the definition above. The activation rate follows because each link is only activated once per scheduling period. Now assume a packet arrives at the source at some time $t$, and that slice widths are large enough to serve all enqueued packets when a link is scheduled. This is non-restrictive because $\tau_i^*$ is defined as the smallest feasible deadline for an arbitrarily small throughput. At time $t \bmod{(\phi+1)}$, the packet is served at link $T^{(i)}_0$, and following the ORR schedule, it is served at each subsequent link in the next $|T^{(i)}|-1$ slots. In the worst case, the packet must wait $\phi$ slots at the source before being served, so it spends a maximum of $|T^{(i)}|+\phi$ slots in the network, which verifies $\tau_i^*$ for the ORR policy.

It remains to show that no other policy can achieve a smaller deadline for all packets. Recall that only one of $\{T^{(i)}_0, \dots, T^{(i)}_{\phi} \}$ can be scheduled in the same slot. We claim that it must take at least some packets $2\phi+1$ slots to reach link $T^{(i)}_{\phi+1}$. The fewest slots a packet can take is $\phi+1$, so we define $\Delta(t)$ as the number of additional slots it takes beyond this minimum for packets which arrive at the source at time $t$. Then if $\Delta(t) \geq \phi$ for any $t$, our claim must hold. Note that $\Delta(t)=0$ at time $t$ when packets arrive, and it is incremented by one each time a scheduling decision is made that does not schedule those packets.

Assume our claim does not hold, i.e., that $\Delta(t) < \phi$ for all $t$. First note that packets which arrive at $t$ can allow at most $\Delta(t)$ slots of arrivals behind them to ``catch up''. We say a slot of arrivals $t+j$ is caught up to $t$ if it is enqueued at the same link as the slot of arrivals $t$. Each slot of arrivals which catches up to $t$ increments $\Delta(t)$, so arrivals from slot $t+\phi$ cannot be caught up to $t$ under our assumption on $\Delta(t)$. Let $t+j^* \leq t+\phi$ be the first slot which is not caught up to $t$ when packets from slot $t$ reach $T^{(i)}_{\phi+1}$. Because the previous slot of arrivals is caught up to $t$, it must have been scheduled $\phi$ times independently of the arrivals from slot $t+j^*$. Therefore, $\Delta(t+j^*) \geq \phi$, which is a contradiction. This proves the result.

\subsection{Proof of Corollary~\ref{cor:jointoptimalpoint}}\label{app:jointoptproof}

From Theorem~\ref{th:orrbound}, any ORR policy is deadline-optimal, so it remains only to show that it is throughput-optimal. Under equal slice widths,~\eqref{eq:thoptformulation} becomes
\begin{align}
\begin{aligned}\label{eq:orrthopt}
    \max_{\pi \in \Pi_c} &\min_{e \in T^{(i)}} \bar{\mu}_e^{\pi} \\
    \text{s.t.} \ &\mu^{\pi}(t) \in M^{\phi} , \ \forall \ 0 \leq t \leq K^{\pi}.
\end{aligned}
\end{align}

Under $\phi$, only one of every $\phi+1$ consecutive slots in $T^{(i)}$ can be activated at once. If $|T^{(i)}| \leq \phi$, only one link can be scheduled at a time so let $\phi = |T^{(i)}|-1$ without loss of generality. Then,
\begin{equation}
    \sum_{e=j}^{j+\phi} \mu_e^{\pi}(t) \leq 1, \ \forall \ t \geq 0, \ 0 \leq j < |T^{(i)}| - \phi,
\end{equation}

and any $\pi \in \Pi_c$, where we slightly abuse notation to denote $T^{(i)}_j$ as $j$ in the subscript. Averaging these constraints over each scheduling period,
\begin{equation}
    \sum_{e=j}^{j+\phi} \bar{\mu}_e^{\pi} \leq 1, \ \forall \ 0 \leq j < |T^{(i)}| - \phi,
\end{equation}

and finally summing over these constraints,
\begin{multline}\label{eq:kphinecessarycond}
    \bar{\mu}_0 + \bar{\mu}_{-1} + 2(\bar{\mu}_1 + \bar{\mu}_{-2}) + \dots + \phi(\bar{\mu}_{\phi-1} + \bar{\mu}_{-\phi}) \\
    + (\phi+1) \sum_{j=\phi}^{|T^{(i)}|-\phi-1} \bar{\mu}_j \leq |T^{(i)}|-\phi, 
\end{multline}

where we again slightly abuse notation to denote $T^{(i)}_{|T^{(i)}|-j}$ as $-j$. Because the sum relaxes the constraints, this is a necessary but not sufficient condition for a feasible policy.

Counting terms, there are $(\phi+1)(|T^{(i)}|-\phi)$ activation rates summed in this expression. Under the ORR policy, each link $e$ has $\bar{\mu}_e^{ORR(i)} = \frac{1}{\phi+1}$, so summing over all of them shows that the bound in~\eqref{eq:kphinecessarycond} is tight under the ORR policy. Because the bound is a necessary condition for feasibility, any policy for which it is tight must be a solution to~\eqref{eq:orrthopt}. This can be shown because increasing $\bar{\mu}_e$ for any $e$ causes a decrease in $\bar{\mu}_{e'}$ for some $e'$, which strictly decreases the objective in~\eqref{eq:orrthopt}. Therefore, ORR must be throughput-optimal.

\subsection{Proof of Lemma~\ref{lemma:emptyqueues}}\label{app:emptyqueueproof}

Under resource-minimizing slices, $\bar{w}_{i,e}^{\pi} = \lambda_i$ for all $f_i \in \mathcal{F}$ and $e \in T^{(i)}$ by definition. Then the total number of $f_i$ packets served in each period is at most $\eta_e^{\pi} w_{i,e} = \bar{w}_{i,e}^{\pi} K^{\pi} = \lambda_i K^{\pi}$.

Every $e$ after the first link in $T^{(i)}$ sees at most $\lambda_i K^{\pi}$ arrivals per scheduling period, because this is the most packets which can be served at the previous link in the route. Furthermore, the first link in the route sees exactly this number of arrivals per scheduling period, so this holds for all links. Now assume that $Q_{i,e}(t) = 0$ at some time $t$, and let $t_{i,e}(t)$ be the first time after $t$ when any number of packets have arrived and been served, and the queue is empty again, so that $Q_{i,e}(t_{i,e}(t)) = 0$. Note that this always occurs in a slot immediately after link $e$ is served.

We claim that $t_{i,e}(t) \leq t + K^{\pi}$ for any $t$. To see this, recall that link $e$ serves at most $\eta_e^{\pi} w_{i,e}$ packets per period, but that any policy in $\Pi_c$ is work conserving, so it always serves the smaller of $w_{i,e}$ and its queue size when it is scheduled. Let $t'$ be the last time link $e$ is scheduled before time $t+K^{\pi}$, or equivalently the $\eta_e^{\pi}$-th time it is scheduled after time $t$. Then, either there are more than $w_{i,e}$ packets in link $e$'s queue the first $\eta_e^{\pi}-1$ times it is scheduled after time $t$, or the queue is emptied in one of these scheduling events, and under the primary interference assumption, no packets can be served at the previous link and arrive at link $e$ in the same slot. Therefore, in the latter case, $t_{i,e}(t)$ is the slot immediately following when it is emptied. In the former case, this implies that link $e$ serves $(\eta_e^{\pi}-1) w_{i,e}$ packets in the interval $[t,t')$, and because at most $\eta_e^{\pi} w_{i,e}$ packets can arrive during that interval, the queue must be emptied when it is scheduled at $t'$, and $t_{i,e}(t) = t'+1 \leq t + K^{\pi}$.

Now let $\tilde{t}=t_{i,e}(t)$, and by the same argument, there must exist a $t_{i,e}(\tilde{t}) \leq \tilde{t}+K^{\pi}$ where packets have arrived and been served, and $Q_{i,e}(t_{i,e}(\tilde{t})) = 0$. By induction, this holds for any scheduling period $K^{\pi}$ if $Q_{i,e}(t) = 0$ at some $t$. Because queues are initialized to zero, this completes the induction step, and the proof is complete. Note that after some $t_0$, when the network has reached steady state, $t_{i,e}$ will occur at regular intervals from Theorem~\ref{th:cyclicschedules}, but that these times are not guaranteed to coincide for different flows.

\subsection{Proof of Theorem~\ref{th:deadlinebound}}\label{app:deadlineboundproof}

Assume without loss of generality that slices are resource-minimizing. Allowing slice widths to be larger than this can only decrease the deadline bound, so this assumption is not only non-restrictive but necessary. From Lemma~\ref{lemma:emptyqueues}, under resource-minimizing slices there exists a slot for each flow in each scheduling period where $Q_{i,e}^{\pi}(t) = 0$. Without loss of generality, consider flow $f_i$ and let $t_{i,e}$ be any such time when $Q_{i,e}$ is empty. Under a fixed set of activation rates $\bar{\mu}^{\pi}$, link $e$ is activated exactly $\eta_e^{\pi} = \bar{\mu}_e^{\pi} K^{\pi}$ times per scheduling period. Therefore, it must be activated $\eta_e^{\pi}$ times in the interval $t_{i,e} \leq t < t_{i,e} +K^{\pi}$.

Assume that $t \geq t_0$, so the network has already reached steady-state and $Q_{i,e}(t_{i,e}+K^{\pi}) = Q_{i,e}(t_{i,e}) = 0$. This is non-restrictive because before $t_0$ there are fewer packets in each queue. One can simply add dummy packets to ensure that $Q_{i,e}(t) = Q_{i,e}(t+n_t K^{\pi})$ for some integer $n_t$ such that $t+n_t K^{\pi} \geq t_0$ for all $t < t_0$. Then in each scheduling period, exactly $\eta_e^{\pi} w_{i,e} = \lambda_i K^{\pi}$ packets arrive and are served at link $e$, so it always serves a full slice of packets. This can be shown by induction and the fact that $Q_{i,e}(t) = Q_{i,e}(t+K^{\pi})$. Because $\lambda_i K^{\pi}$ packets arrive at the source link in a scheduling period, the same number are served by the source link in this period. Therefore the same number arrive at the second link, and so the same number are served at the second link, and so on for each link. Because link $e$ is scheduled $\eta_e$ times and serves a full $w_{i,e}$ packets each time, the maximum delay that packets can experience at link $e$ occurs when link $e$ is scheduled the $\eta_e$ time slots leading up to time $t_{i,e}+K^{\pi}$, regardless of when packets arrived. 

Now consider the next link $e+1$ in the route. It also receives $\lambda_i K^{\pi}$ arrivals and must serve this number of packets in each scheduling period. As shown for link $e$, the maximum delay packets can experience occurs when it is scheduled in $\eta_{e+1}$ consecutive slots leading up to time $t_{i,e+1}+K^{\pi}$, where the $\lambda_i K^{\pi}$ packets from link $e$ arrive in the interval $t_{i,e+1} \leq t < t_{i,e+1}+K^{\pi}$. In particular, the worst-case delay occurs when link $e+1$ begins receiving packets from link $e$ at time $t_{i,e+1}+1$. The same holds for each link in $T^{(i)}$, and so the worst-case delay occurs when links are scheduled in blocks such that any link $e$ begins scheduling its block immediately after link $e+1$ finishes scheduling.

Because packets are scheduled in blocks, we consider the first and last packet in each block. Denote the time the first packet arrives at the source as $t(s_0)$ and the time the last packet in the block arrives at the source as $t(s_{-1}) = t(s_0) + K^{\pi}-1$. Similarly, denote the time the first packet is delivered to its destination as $t(d_0)$ and the time the last packet is delivered as $t(d_{-1}) \leq t(d_0) + K^{\pi}-1$. The total delay seen by the first packet is therefore larger than the last packet, and because all intermediate packets see a gradient of delay between these two values, the first packet sees the largest delay. It is always the first packet served in a block, so at each link $e$ it sees a delay of $K^{\pi}-\eta_e^{\pi} = K^{\pi}(1 - \bar{\mu}_e^{\pi})$ before being served. Finally, it takes one slot to be delivered to the destination after being served at the last link, so the bound follows.

This bound is tight for at least one policy by construction of the policy just described. We say that there are many policies where $\tau_i^*$ grows linearly with $K^{\pi}$, because any policy which schedules links in blocks incurs a delay at each link that is linear in $K^{\pi}$. In fact, from Lemma~\ref{lemma:taulowerbound}, any policy with inter-scheduling times that depend on the schedule length incurs such a delay.

\subsection{Proof of Theorem~\ref{th:generalcomplexity}}\label{app:complexityproof}

Following the same idea as in Theorem~\ref{th:singleflowcomp}, we can represent possible queue states of the network as nodes on a directed graph $G_{\mathcal{F}} = (V_{\mathcal{F}},E_{\mathcal{F}})$, with edges corresponding to valid activation sets, such that an edge $e$ from state $s_t$ to $s_{t+1}$ implies that activating the set of links corresponding to $e$ in state $s_t$ results in state $s_{t+1}$. The cost of each action is the sum of queue sizes at the destination node. Note that because slice widths are not fixed as in Theorem~\ref{th:singleflowcomp}, this allows for an infinite number of states corresponding to each action. To overcome this difficulty, we relax the condition 
\begin{equation}
    \sum_{f_i : e \in T^{(i)}} w_{i,e} \leq c_e, \ \forall \ e \in E
\end{equation}
so that each slice width $w_{i,e} = c_e$, for all flows $f_i$ and links $e \in T^{(i)}$. Then taking an action $e$ in state $s_t$ leads to a unique state, so the number of states is finite in the relaxed problem. Furthermore, any solution to the original problem is feasible for the relaxed problem, so the relaxed problem is at most as hard.

From Corollary~\ref{cor:deadlinecondition}, finding a policy that supports $\mathcal{F}$ is equivalent to finding a policy where $\sum_{e \in T{(i)}} Q_{i,e}(t) \leq \lambda_i \tau_i$ for all $t$ and for each flow $f_i$. Relaxing this condition gives 
\begin{equation}\label{eq:relaxedqueueconstr}
    \sum_{f_i \in \mathcal{F}} \sum_{e \in T^{(i)}} Q_{i,e}(t) \leq \sum_{f_i \in \mathcal{F}} \lambda_i \tau_i, \ \forall t \geq 0,
\end{equation}
where the left-hand side of~\eqref{eq:relaxedqueueconstr} is the cost of taking an action leading to state $s_t = \{ Q_{i,e}(t) \}$.

Finding a supporting policy is again equivalent to finding a cycle on $G_{\mathcal{F}}$ with bounded maximum cost, which can be further relaxed to finding a cycle with bounded average cost. Again, because of the relaxations, the original problem is at least as hard as this final relaxed problem, for which the best-known algorithm is $O(|V_{\mathcal{F}}||E_{\mathcal{F}}|)$ from Karp~\cite{karp1978characterization}.

Let $q_{i,e}$ be the number of discrete sizes that $Q_{i,e}$ can take, which is finite in the relaxed problem with fixed slices. Let $q \triangleq \min_{f_i \in \mathcal{F}, e \in T^{(i)}} q_{i,e}$, and use this to bound $|V_{\mathcal{F}}| \geq q^{\Psi}$, where $\Psi = \sum_{f_i \in \mathcal{F}} |T^{(i)}|$ is the total number of queues. There is at least one valid action in each state, so $|E_{\mathcal{F}}| \geq |V_{\mathcal{F}}|$, and we conclude that the original problem has complexity at least $O(q^{2 \Psi})$. This completes the proof.

\subsection{Proof of Theorem~\ref{th:gdmapproximation}}\label{app:gmapproxproof}

Assume $M^*$ is known, and let $\chi'_{M^*}$ be the number of matchings in $M^*$. Arrange the matchings in a histogram by their respective values of $\bar{\mu}_m$, from largest to smallest, assigning $\bar{\mu}_m$ to the height of column $m$. The resulting plot has $\chi'_{M^*}$ columns arranged in decreasing order and a total area equal to $\sum_{m \in M^*} \bar{\mu}_m$, which is precisely the optimal solution to~\eqref{eq:edgematchings}. Proceed in a similar way with $M^{GM}$. By definition, the matchings are already sorted by $\bar{\mu}_m$, so the resulting plot has $\chi'_{GM}$ columns arranged in decreasing order, and a total area equal to $\sum_{m \in M^{GM}} \bar{\mu}_m$.

Let $\bar{\mu}_{(i)}^*, \ i \in \{1,\dots,\chi'_{M^*} \}$ be the value of the $i$-th column of the (sorted) $M^*$, and $\bar{\mu}_{(i)}^{GM}, \ i \in \{1,\dots,\chi'_{GSM} \}$ be the equivalent under GM. Then we claim that
\begin{equation}\label{eq:histcolumns}
    \bar{\mu}_{(i)}^* \geq \bar{\mu}_{(i)}^{GM}, \ \forall \ i \in \{1,\dots,\chi'_{M^*} \},
\end{equation}

from the greediness of the GM algorithm. Assume that the optimal solution is also designed by an algorithm constructing super-matchings in order of sorted columns, and at each step $i$ let $E_{(i)}^*$ be the set of links/super-links not yet added to a matching. Likewise, let $E_{(i)}^{GM}$ be the equivalent for GM. Then, if
\begin{equation}
    \max_{e \in E_{(i)}^*} \bar{\mu}_e^0 \geq \max_{e \in E_{(i)}^{GM}} \bar{\mu}_e^0,
\end{equation}

is true at some step $i$, then by the greediness of GM it is also true at step $i+1$. Both algorithms start with the same set of links, so it must hold at $i=1$, which completes the induction step and proves that it holds for all $i \in \{1, \dots, \min \{\chi'_{M^*}, \chi'_{GM} \} \}$. Furthermore, because columns are sorted in decreasing order and every link/super-link is added to a super-matching, this implies that each algorithm adds the maximum remaining element in each step, so $\bar{\mu}_{(i)}^* \geq \bar{\mu}_{(i)}^{GM}$ for all $i \in \{1, \dots, \min \{\chi'_{M^*}, \chi'_{GM} \} \}$ as well. Finally, because the area under each histogram is by definition minimized by $M^*$, $\chi'_{M^*} \leq \chi'_{GM}$ and~\eqref{eq:histcolumns} holds.

Now imagine the horizontal axis of the histogram of $M^*$ is scaled by a factor of $\frac{|E|}{\chi'_{M^*}}$. The area scales by the same factor, and the plot now has nonzero values from $0$ to $|E|$ on the horizontal axis. Define this function as $H^*(e)$. Similarly, let the horizontal axis of the histogram of $M^{GM}$ be scaled, but this time by a factor of $\frac{|E|}{\chi'_{GM}}$. Once again, the area scales by the same factor and the plot has nonzero values from $0$ to $|E|$. Define this function as $H^{GM}(e)$.

Then $H^*(e) \geq H^{GM}(e)$ for all $e \in \{1,\dots,|E| \}$. This follows from~\eqref{eq:histcolumns}, the fact that $H^*(e)$ and $H^{GM}(e)$ are decreasing functions of $e$, and the fact that $\chi'_{M^*} \leq \chi'_{GM}$, so the scaled columns of $H^*$ are wider than those of $H^{GM}$. It follows that the total area under $H^*$ is no less than the area under $H^{GM}$, or more explicitly,
\begin{equation}
    \frac{|E|}{\chi'_{M^*}} \sum_{m \in M^*} \bar{\mu}_m \geq \frac{|E|}{\chi'_{GM}} \sum_{m \in M^{GM}} \bar{\mu}_m.
\end{equation}

Rearranging terms,
\begin{equation}
    \sum_{m \in M^{GM}} \bar{\mu}_m \leq \Big( \frac{\chi'_{GM}}{\chi'_{M^*}} \Big) \sum_{m \in M^*} \bar{\mu}_m,
\end{equation}

and recognizing that $\chi'_{M^*} \geq \chi'(G)$ for any $M^*$, and $\chi'(G) \geq \Delta(G)$ by Vizing's Theorem~\cite[Sec.~9.3]{gross2018graph}, the result follows.

Finally, note that the GM algorithm is nothing more than a greedy edge coloring on the graph $G$, where the edges are revealed to the algorithm in order of weight. It has been shown that greedy edge coloring uses at most $2 \Delta(G) - 1$ colors, independently of the order in which edges are revealed, provided that $\Delta(G) = O(\log |V|)$~\cite{bar-noy_greedy_1992}. This completes the proof.

\subsection{Proof of Lemma~\ref{lemma:regconditions}}\label{app:regconditionsproof}

We begin by showing the first part. Condition $(1)$ is necessary for any work-conserving schedule, and condition $(3)$ is likewise necessary for any regular schedule. It remains to show that in conjunction with condition $(2)$, they are sufficient conditions for a regular schedule, and we will prove this by construction.

By definition of a step-down vector, $\bar{\mu}_{|M|}$ is the smallest activation rate, and every other rate $\bar{\mu}_m = \alpha_m \bar{\mu}_{|M|}$, for some positive integer $\alpha_m$. To formalize this, let the the integer factor between nodes $i$ and $j$ where $i < j$ be $z_{ij}$, so that $\bar{\mu}_i = z_{ij} \bar{\mu}_j$. From condition $(3)$, $k_m = \frac{1}{\bar{\mu}_m}$ for all $m$, and after substituting and rearranging, we have
\begin{equation}\label{eq:stepdownlength}
    k_m z_{m |M|} = k_{|M|}, \ \forall \ m.
\end{equation}

Define the number of times matching $m$ appears in the schedule as $\eta_m = \bar{\mu}_m K$ for a schedule length $K$. Rearranging and substituting condition $(3)$ gives $k_m \eta_m = K$ for all $m$. Any $K$ which satisfies this equation for integer $\eta_m$ is a valid schedule length, so without loss of generality let $\eta_m = z_{m |M|}$ for all $m$, which gives $K = k_{|M|}$ from~\eqref{eq:stepdownlength}.

Define a schedule $\mathcal{S} = \{S_0, S_2, \dots, S_{K-1} \}$, where $S_i = m$ if matching $m$ is scheduled in slot $i$. Starting with matching $1$, schedule it in slots $\mathcal{S}_1 = \{ S_0, S_{k_1}, \dots, S_{(\eta_1-1) k_1} \}$. Note that this satisfies regularity, and that the starting position does not matter because the schedule is periodic, so without loss of generality we choose position $1$.


Now we proceed in order, beginning with matching $2$ and ending when all matchings are scheduled. By the step-down property, for matching $m$, we have $k_m = z_{lm} k_l$ for all $1 \leq l < m$. If an empty slot $\tau_m$ exists to schedule matching $m$, then subsequent slots at intervals of $k_m$ are also empty, and we set $\mathcal{S}_m = \{ S_{\tau_m}, S_{\tau_m+k_m}, \dots, S_{\tau_m + (\eta_m-1) k_m} \}$. We show this by contradiction, and define $\tau_1 = 0$ for ease of notation. Imagine one slot $\tau_m + c_m k_m$ is not empty, so it must belong to one of the previously scheduled matchings. Denote this matching as $m'$. Then $\tau_m + c_m k_m = \tau_m + c_m z_{m'm} k_{m'} = \tau_{m'} + c_{m'} k_{m'}$, which implies that $\tau_m = \tau_{m'} + \alpha k_{m'}$ for some integer $\alpha$, so it cannot be empty, which is a contradiction. Therefore, $\mathcal{S}_m$ is valid for any empty $\tau_m$, and without loss of generality we choose it to be the first empty slot.

As a result, scheduling matchings in order and assigning them to the first empty slot in the schedule (along with subsequent slots defined by their inter-scheduling times) guarantees a feasible regular schedule. Because $\sum_m \bar{\mu}_m = 1$, we have that $\sum_m \eta_m = K$, so the schedule will not run out of empty slots before scheduling all matchings. This proves the first part.

We now show that without the third condition, an almost-regular schedule still exists. We show this by constructing a longer, regular schedule and removing the unused slots. Let the elongated schedule length be $K' = k_1 \eta_1 = \lceil \frac{1}{\bar{\mu}_1} \rceil \eta_1$, and let $k_m' = K'/\eta_m$ for all $m$. Note that for matching $1$, $k_1 = k_1'$, and assign $\mathcal{S}_1 = \{ S_0, S_{k_1'}, \dots, S_{(\eta_1-1) k_1'} \}$. Proceed to schedule matchings $2$ through $m$ in nearly the same manner as before, but taking slightly more care in choosing $\tau_m$ for each $m$.

Rather than choosing the first empty slot for each value of $\tau_m$, we select it in the following manner. Denote the set of empty slots most closely following slots in $\mathcal{S}_1$ as $\chi_1$. In other words, given that $\mathcal{S}_1 = \{S_0, S_{k_1'}, \dots \}$, if any slots in $\{ S_1, S_{k_1'+1}, \dots \}$ are empty then assign them to $\chi_1$. If not, then if any slots in $\{ S_2, S_{k_1'+2}, \dots \}$ are empty then assign them to $\chi_1$, etc. When $\chi_1$ is no longer empty, find the subset $\chi_2 \subseteq \chi_1$ most closely following slots in $\mathcal{S}_2$, in the same fashion as before. Continue this process until finding $\chi_{m-1}$, and assign $\tau_m$ to be the first slot in $\chi_{m-1}$.

The proof of the first part of the Lemma holds for any choice of $\tau_m$, provided the slot is empty, so assigning $\tau_m$ as above also constructs a valid regular schedule of length $K'$. Because the schedule is elongated not all slots are filled, and the remainder of the slots are removed by compressing the schedule. Clearly the schedule is no longer regular after compression, but it is almost-regular by design.

To see this, consider the layout of ``holes'' left in the schedule after the addition of each matching. Following matching $1$, there are $\eta_1$ blocks of $k_1'-1$ holes, which are evently spaced by symmetry. The addition of subsequent matchings always fills the leftmost hole in a block, and always selects the largest remaining blocks to fill. This holds for $\tau_m$ by the iterative process of selecting $\chi_1, \chi_2, \dots$, and must also be true for the other holes filled by matching $m$ under the step-down property. If not, then there is some integer $c_m$ such that $\tau_m + c_m k_m' - 1 = \tau_m + c_m z_{\tilde{m} m} k_{\tilde{m}}' - 1 \neq \tau_{\tilde{m}} + c_{\tilde{m}} k_{\tilde{m}}'$, where without loss of generality we assume slot $\tau_m-1$ is assigned to $\tilde{m}$. From regularity, every slot $\tau_{\tilde{m}} + \alpha k_{\tilde{m}}'$ for any positive integer $\alpha$ is assigned to $\tilde{m}$, so this is a contradiction.

By always choosing the leftmost hole in a block, the number of blocks never increases and the rightmost hole in each block remains regularly spaced in the schedule. Furthermore, by always choosing the largest block, the size of any two blocks can never differ by more than one hole. Therefore, when all matchings have been scheduled, the number of holes that occur in an interval of $n$ slots cannot differ by more than one. Then because links are scheduled regularly, the number of holes between any two successive slots assigned to a matching $m$ cannot differ by more than one, so after removing the holes, the schedule must be almost-regular. This completes the proof.

\bibliographystyle{IEEEtran}
\bibliography{mobihoc_submission}


\end{document}